\newcommand{\RR}{{\mathbb R}}
\newcommand{\cI}{\mathcal{I}}
\newcommand{\cM}{\mathcal{M}}
\newcommand{\cC}{\mathcal{C}}
\newcommand{\cX}{\mathcal{X}}
\newcommand{\E}{{\mathbb E}}
\newcommand{\vect}[1]{\mathbf{#1}}
\newcommand{\bx}{\vect{x}}
\newcommand{\bw}{\vect{w}}
\newcommand{\by}{\vect{y}}
\newcommand{\bz}{\vect{z}}
\def\b1{{\bf 1}}
\begin{document}

\title{\Large Estimating the Nash Social Welfare \\
for coverage and other submodular valuations}

\author{Wenzheng Li\thanks{Stanford University} \and Jan Vondr\'{a}k\thanks{Stanford University}}

\date{}

\maketitle


\fancyfoot[R]{\scriptsize{Copyright \textcopyright\ 20XX by SIAM\\
Unauthorized reproduction of this article is prohibited}}





\begin{abstract} \small\baselineskip=9pt We study the Nash Social Welfare problem: Given $n$ agents with valuation functions $v_i:2^{[m]} \rightarrow \RR_+$, partition $[m]$ into $S_1,\ldots,S_n$ so as to maximize $(\prod_{i=1}^{n} v_i(S_i))^{1/n}$. The problem has been shown to admit a constant-factor approximation for additive, budget-additive, and piecewise linear concave separable valuations; the case of submodular valuations is open. 

We provide a $\frac{1}{e} (1-\frac{1}{e})^2$-approximation of the {\em optimal value} for several classes of submodular valuations: coverage, sums of matroid rank functions, and certain matching-based valuations.
\end{abstract}

\section{Introduction}

Nash Social Welfare is an optimization problem in the following form:

\paragraph{Nash Social Welfare (NSW).}
Given $m$ indivisible items and $n$ agents with valuation functions 
$v_i:\{0,1\}^m \rightarrow \RR_+$,
we want to allocate the items to the agents, that is find ${\bf x}\in\{0,1\}^{n\times m}$ such that for each $j$, $\sum_{i=1}^{n} x_{ij} \leq 1$, in order to maximize the {\em geometric average} of the valuations,
$$ \nu(\bx) = \left( \prod_{i=1}^{n} v_i(\bx_i) \right)^{1/n}.$$

The notion of Nash Social Welfare goes back to John Nash's work \cite{Nash50} on bargaining in the 1950s. It can be viewed as a compromise between Maximum Social Welfare (maximizing the summation $\sum_{i=1}^{n} v_i(\bx_i)$, which does not take fairness into account), and Max-Min Welfare (maximizing $\min_{1 \leq i \leq n} v_i(\bx_i)$, which is focusing on the least satisfied agent ignores the possible additional benefits to others). Nash Social Welfare also came up independently in the context of competitive equilibria with equal incomes \cite{Var74} and proportional fairness in networking \cite{Kel97}. An interesting feature of Nash Social Welfare is that the optimum is invariant under scaling of the valuations $v_i$ by independent factors $\lambda_i$; i.e., each agent can express their preference in a ``different currency" and this does not affect the problem.

The difficulty of the problem naturally depends on what class of valuations $v_i$ we consider. Unlike the (additive) Social Welfare Maximization problem, the Nash Social Welfare problem is non-trivial even in the case where the $v_i$'s are {\em additive}, that is $v_i(\bx_i) = \sum_{j=1}^{m} w_{ij} x_{ij}$. It is NP-hard even in the case of 2 agents with identical additive valuations (by a reduction from the Subset Sum problem), and APX-hard for multiple agents \cite{Lee17}. Constant-factor approximations for the additive case were discovered only recently, in remarkable works by Cole and Gatskelis \cite{CG18}, and subsequently via a very different algorithm by Anari et al. \cite{AGSS17}. Inspired by the exciting breakthrough, a series of followup work have also been developed along this line \cite{barman2018finding, cole2017convex}.


A natural question is whether a constant-factor approximation extends to broader classes of valuations, for example submodular valuations (where a constant factor is known for Maximum Social Welfare \cite{FNW78,Von08}). This is unknown at the moment. Some progress has been made for classes of valuations beyond additive ones: a constant factor for concave piece-wise linear separable utilities \cite{AMGV18}, and for budget-additive valuations \cite{GHM19,CCG18}.

The problem seems particularly challenging for general submodular valuations. Only very recently, an $O(n \log n)$-approximation has been designed for submodular valuations in \cite{GKK20}. (The authors also present a $(1-1/e-\epsilon)$-approximation in the case where the number of agents is constant, which follows from earlier work on multiobjective submodular optimization.) 

\subsection{Our results}

We consider the Nash Social Welfare optimization problem for several subclasses of submodular valuations. Our main results are constant-factor approximations {\em of the optimal value} for several such classes that we discuss below. We stress that {\em we do not know} how to find an allocation of corresponding value in polynomial time, which is a situation reminiscent of certain variants of max-min allocation. We discuss this in more detail below.

The classes of submodular valuations that we consider are:

\begin{itemize}
\item Matroid rank functions: Given a matroid $\cM = ([m], \cI)$, its rank function is $$r_{\cM}(S) = \max \{|I|: I \subseteq S, I \in \cI\}.$$ More generally, for $\bw \in \RR_+^m$, a weighted matroid rank function is $$r_{\cM,\bw}(S) = \max \{ \sum_{j \in I} w_j: I \subseteq S, I \in \cI \}.$$

We obtain a $\frac{1}{e} (1-\frac{1}{e})^2$-approximation for this class.

\item The cone generated by matroid rank functions: $\cC =$ 
$$  \left\{ \sum_{\mbox{\tiny matroid } \cM} \alpha_\cM r_{\cM}: \alpha_\cM \geq 0 \mbox{ for each matroid } \cM \right\}.$$
This includes weighted matroid rank functions as a subclass. The cone also contains coverage valuations, $v(S) = |\bigcup_{j \in S} A_j|$ for a set system $(A_j)_{j=1}^{m}$.

We also obtain a $\frac{1}{e} (1-\frac{1}{e})^2$-approximation for this class.

\item Bipartite matching with a matroid constraint: Let $G = (L, R, E)$ be a bipartite graph with non-negative weight $w_{j, k}$ on the edge $(j, k) \in E$, vertex set on the left-hand side $L = [n]$ and a matroid $\mathcal{M}$ defined on the vertex set on the right side $R$.
For $S \subseteq L$, let $v(S)$ be the maximum weight of a matching such that vertices matched on the left are a subset of $S$, and the vertices matched on the right are an independent set in $\mathcal{M}$.

It is known that these valuations satisfy the property of {\em gross substitutes} (a subclass of submodular functions for which the welfare maximization problem can be solved optimally). In fact there is apparently no known example of a gross-substitutes function which is not in this form. \footnote{Based on Kazuo Murota's problem set \cite{Murota15} and personal communication with Renato Paes Leme's tutorial \cite{Leme18}}

We still obtain a factor of $\frac{1}{e} (1-\frac{1}{e})^2$ for this class, as well as the cone generated by such functions.

\item More generally, we can handle classes of (non-submodular) valuations that support a certain form of {\em contention resolution}. As an example, we show that it is possible to obtain a $\Theta(1/k)$-approximation for valuations defined by a $k$-dimensional matching problem with matroid constraints in Appendix~\ref{appen:high-dimension}. Similarly we can also handle the cone generated by such functions. 

\end{itemize}

\subsection{Techniques and discussion}

Our approach is based on a convex programming relaxation from \cite{AGSS17}, which relies on properties of stable polynomials and was inspired by Gurvits' proof of the van der Waerden conjecture for the permanent of doubly stochastic matrices. We extend this relaxation to the classes of submodular functions that we consider, and prove that its integrality gap is bounded by $\frac{1}{e} (1-\frac{1}{e})^2$. This factor arises from two sources: The first one is a generalization of Gurvits' lemma \cite{gurvits2005complexity}, developed in \cite{AGSS17}, which yields a factor of $\frac{1}{e}$. The second ingredient is the use of a {\em contention resolution scheme} for matroids, which we use twice to deal with the presence of matroid rank functions as well as the assignment constraint of the problem. Combining these ingredients via the FKG inequality yields a factor of $\frac{1}{e} (1- \frac{1}{e})^2$.

Technically, we prove that a certain rounding technique achieves a $c^n$ approximation factor in terms of the expected {\em product of valuations} $\E[\prod_{i=1}^{n} v_i(\bx_i) ]$. This proves that there exists a solution whose value in terms of the {\em geometric average} $(\prod_{i=1}^{n} v_i(\bx_i) )^{1/n}$ is at least $c \cdot OPT$. 
However, the rounding method is randomized and a solution of value $c \cdot OPT$ might appear with an exponentially small probability so the expect running time would be exponential. 

A similar issue already appears in \cite{AGSS17}, where the authors resolve this by a method of conditional expectations. The estimator of conditional expectations turns out to be a matching-counting problem which is known to admit an FPTAS by Markov Chain Monte Carlo methods. 
In our case, an analogous approach leads to counting problems which we currently don't know how to resolve.
In particular, in the case of coverage valuations, we would need to resolve the following counting problem.




\paragraph{Open problem: Counting of constrained mappings.}
Given sets $A, B$, where $B$ is partitioned into blocks $B_1,\ldots,B_n$, and weights $w_{ij} \geq 0$ for all $i \in A, j \in B$, for each choice of $e_1 \in B_1, e_2 \in B_2, \ldots, e_n \in B_n$ let ${\cal S}(e_1,\ldots,e_n)$ be the set of all mappings $\sigma:A \rightarrow B$ such that $e_1,\ldots,e_n \in \sigma[A]$. Estimate the quantity
$$ \sum_{e_1 \in B_1,\ldots,e_n \in B_n} \sum_{\sigma \in {\cal S}(e_1,\ldots,e_n)} \prod_{i \in A} w_{i,\sigma(i)}.$$

Finally, we want to remark that the situation is somewhat reminiscent of the Santa Claus problem, for which the first constant-factor integrality gap bound was proved by Uriel Fiege in \cite{Feige08}. This proof relied on the use of the Lov\'asz Local Lemma, which was later made algorithmically efficient in \cite{HSS11}.  While our analysis does not use the Lov\'asz Local Lemma, there are certain similarities here. In some special cases, the rounding method essentially attempts to build a perfect matching by selecting a random match for each vertex. The probability of success is obviously exponentially small.

\section{Matroid rank functions} \label{sec:matroid}

We present first our algorithm for (weighted) matroid rank functions, which is technically the simplest case but it already illustrates the basic ideas of our approach. Further generalizations require some technicalities but mostly follow along the same lines.

Let the valuation of player $i$ be a weighted matroid rank function,
$$ v_i(\bx_i) = \max \{ \sum_{j \in I} w_{ij}: I \in \cI_i, \b1_I \leq \bx_i \} $$
where $\cM_i = ([m], \cI_i)$ is a matroid and $w_{ij} \geq 0$ for all $i \in [n], j \in [m]$.

Let $P(\cM_i)$ denote the associated matroid polytope,
$$ P(\cM_i) = \mbox{conv}( \{ \b1_I: I \in \cI_i \}).$$
It is known that the weighted rank function can be equivalently written as
$$ v_i(S) = \max \{ \sum_{j \in S} w_{ij} x_{ij}: \bx_i \in P(\cM_i) \}.$$
With this in mind, we suggest the following relaxation (an extension of the convex/concave program from \cite{AGSS17}). In the rest of the paper, we will also use similar notations as \cite{AGSS17}), denoting $\prod_{i\in S}y_i$ by $y^S$ and $\{S\subseteq [m] \big| |S| = n\}$ by $\binom{[m]}{n}$.

\subsection{Convex relaxation for matroid rank functions}

\begin{align*}
(P_1) \\
&\max_{\bx \in \RR^{n \times m}}  \inf_{\by \in \RR_+^m: y^S\ge1,\forall S\in\binom{[m]}{n}} \prod_{i=1}^n&\left(\sum_{j=1}^{m} w_{ij} x_{ij} y_j \right), \\
& s.t. ~~~\bx_i \in P(\cM_i) & \forall i \in [n] \\
& ~~~~~~ \sum_{i=1}^{n} x_{ij}\le 1 & \forall j \in [m] \\
\end{align*}

This is indeed a relaxation, as the next lemma shows .

\begin{lemma}
The optimal solution of the ($P_1$) relaxation is at least the optimal solution of the Nash welfare maximization problem.
\end{lemma}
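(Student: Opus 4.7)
The plan is to start from any integral NSW-optimal allocation $S_1,\dots,S_n$ of $[m]$, construct an explicit feasible $\bx$ for $(P_1)$ from it, and then verify that for this $\bx$ the inner infimum over $\by$ is at least $\prod_i v_i(S_i)$. For each agent $i$, let $I_i\subseteq S_i$ be a maximum-weight independent set, so that $v_i(S_i)=\sum_{j\in I_i} w_{ij}$, and set $\bx_i := \b1_{I_i}$. Then $\bx_i\in P(\cM_i)$ because $I_i\in \cI_i$, and the assignment constraint $\sum_i x_{ij}\le 1$ holds since the $I_i\subseteq S_i$ are pairwise disjoint.

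With this choice, $\sum_j w_{ij} x_{ij} y_j = \sum_{j\in I_i} w_{ij} y_j$, and the value we need to lower-bound becomes
\[
\prod_{i=1}^n \sum_{j\in I_i} w_{ij} y_j \;=\; \sum_{(j_1,\ldots,j_n)\in I_1\times\cdots\times I_n} \Big(\prod_{i=1}^n w_{ij_i}\Big)\Big(\prod_{i=1}^n y_{j_i}\Big).
\]
The key observation is that because $I_1,\dots,I_n$ lie in the disjoint blocks $S_1,\dots,S_n$, every tuple $(j_1,\ldots,j_n)$ in this sum has distinct coordinates, so $\{j_1,\ldots,j_n\}\in\binom{[m]}{n}$. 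The defining constraint $y^S\ge 1$ therefore gives $\prod_i y_{j_i}\ge 1$ for every such tuple, and the sum is at least $\sum_{(j_1,\ldots,j_n)}\prod_i w_{ij_i}=\prod_i\sum_{j\in I_i}w_{ij}=\prod_i v_i(S_i)$.

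Taking the infimum over admissible $\by$ and then the geometric mean yields that the value of $(P_1)$ at $\bx$ is at least $\prod_i v_i(S_i)$, which is the $n$-th power of the Nash Social Welfare of the allocation. Degenerate cases (some $v_i(S_i)=0$, i.e.\ $I_i=\emptyset$) are trivial since then the NSW optimum is $0$ and $(P_1)\ge 0$ holds automatically from nonnegativity of $w_{ij}, x_{ij}, y_j$. There is no real obstacle here; the only point worth highlighting is the role of the disjointness of the blocks $S_i$ in ensuring that every monomial appearing in the expansion of the product corresponds to a size-$n$ subset and thus triggers the constraint $y^S\ge 1$. This is exactly what makes the bilinear relaxation with the infimum over $\by$ faithful to the geometric-mean objective.
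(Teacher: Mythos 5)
Your proof is correct and takes essentially the same approach as the paper: both construct a feasible point of $(P_1)$ from an integral optimum by restricting to a maximum-weight independent set in each $\cM_i$, expand the product, observe that disjointness of the allocation makes all resulting index tuples distinct, and then invoke $y^S \ge 1$ to drop the $y$-factors. The only cosmetic differences are that you expand the sum directly over $I_1\times\cdots\times I_n$ (so distinctness is built in rather than argued via $x_{ij}x_{i'j}=0$) and that you explicitly note the degenerate case where some $v_i(S_i)=0$.
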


\begin{proof}
Suppose that $\bx \in \{0,1\}^{n \times m}$ is the optimal allocation.
Since the valuation for player $i$ is a weighted matroid rank function for $\cM_i$, we can assume without loss of generality that the set allocated to player $i$ is independent in $\cM_i$ (if necessary, remove some items to retain only the independent set that defines the value of $v_i(\bx_i)$). Then we have $\bx_i \in P(\cM_i)$ and $\sum_{i=1}^{n} x_{ij} \leq 1$. Also, for every $\by \in \RR_+^m$ such that $y^S \geq 1$  for all $S \in {[m] \choose n}$, we have
\begin{align*}
    \prod_{i=1}^{n} \left( \sum_{j=1}^{m} w_{ij} x_{ij} y_j \right) & = \sum_{j_1,\ldots,j_n=1}^{m} \prod_{i=1}^{n} (w_{ij_i} x_{ij_i} y_{j_i}) \\
    & = \sum_{\mbox{\tiny distinct } j_1,\ldots,j_n} \prod_{i=1}^{n} y_{j_i} \prod_{i=1}^{n} (w_{ij_i} x_{ij_i}) \\
    & \geq \sum_{\mbox{\tiny distinct } j_1,\ldots,j_n} \prod_{i=1}^{n} (w_{ij_i} x_{ij_i}) \\
    & = \prod_{i=1}^{n} \left( \sum_{j=1}^{n} w_{ij} x_{ij}\right)
\end{align*}  

where the inequality holds because of the constraint $y^S \geq 1$, and the equality between summations over all $n-$tuples and distinct $n$-tuples holds because we cannot have $x_{ij} = x_{i' j'} = 1$ for $i \neq i'$ and $j = j'$.
Therefore, the optimum of the relaxation is at least the integer optimum.
\end{proof}

Next, we use the following lemma from \cite{AGSS17}, which explains why the relaxation can be solved efficiently.

\begin{lemma}
\label{lem:concave-convex}
The logarithm of the objective function,
$$ \sum_{i=1}^{n} \log \sum_{j=1}^{m} w_{ij} x_{ij} y_j $$
is concave in $x_{ij}$ and convex in $\log y_j$.
\end{lemma}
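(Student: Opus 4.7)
My plan is to handle the two claims separately, since they decouple cleanly. For concavity in $\bx$, I would fix $\by \in \RR_+^m$ and observe that for each fixed $i$, the inner sum $\sum_{j=1}^{m} w_{ij} x_{ij} y_j$ is an affine (in fact linear) function of $\bx$ taking nonnegative values on the feasible region (since $w_{ij}, y_j \geq 0$ and $\bx_i \in P(\cM_i) \subseteq \RR_+^m$). Because $\log$ is concave, its composition with an affine map is concave, so each term $\log \sum_{j=1}^{m} w_{ij} x_{ij} y_j$ is concave in $\bx$, and summing concave functions over $i$ preserves concavity.

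For convexity in $\log y_j$, I would change variables to $z_j = \log y_j$, so the $i$-th summand becomes $\log \sum_{j=1}^{m} w_{ij} x_{ij} e^{z_j}$. With $\bx$ fixed, write $a_{ij} = w_{ij} x_{ij} \geq 0$ and define $f_i(\bz) = \log \sum_{j=1}^{m} a_{ij} e^{z_j}$. This is the classical log-sum-exp function (with nonnegative weights), and I would show convexity by Hölder's inequality: for $\theta \in [0,1]$ and any $\bz, \bz' \in \RR^m$,
$$ \sum_{j=1}^{m} a_{ij} e^{\theta z_j + (1-\theta) z'_j} = \sum_{j=1}^{m} \bigl(a_{ij} e^{z_j}\bigr)^{\theta} \bigl(a_{ij} e^{z'_j}\bigr)^{1-\theta} \leq \left(\sum_{j=1}^{m} a_{ij} e^{z_j}\right)^{\theta} \left(\sum_{j=1}^{m} a_{ij} e^{z'_j}\right)^{1-\theta}, $$
using Hölder with conjugate exponents $1/\theta$ and $1/(1-\theta)$. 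Taking logs yields $f_i(\theta \bz + (1-\theta)\bz') \leq \theta f_i(\bz) + (1-\theta) f_i(\bz')$, and summing the $f_i$ over $i$ preserves convexity.

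No real obstacle is expected; the argument reduces to two textbook facts (log of an affine function is concave; log-sum-exp is convex). The only things worth noting carefully are that nonnegativity of $w_{ij} x_{ij}$ on the feasible set is what allows the use of $\log$ and of Hölder with these coefficients, and that the two claims are stated in different coordinate systems, which is why the change of variables $z_j = \log y_j$ is needed for the second half. An alternative to Hölder would be to compute the Hessian of $f_i$ directly and recognize it as a covariance matrix under the probability distribution $p_j \propto a_{ij} e^{z_j}$, but the Hölder route is more compact.
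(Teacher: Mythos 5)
Your proof is correct. Note that the paper itself does not prove this lemma but cites it from \cite{AGSS17}; your argument supplies the standard proof one would expect to find there. Both halves are textbook: concavity follows because $\log$ composed with a nonnegative affine map is concave, and convexity in $\log y_j$ is exactly the convexity of log-sum-exp (with nonnegative coefficients $a_{ij} = w_{ij}x_{ij}$), which your H\"older computation with exponents $1/\theta$ and $1/(1-\theta)$ establishes cleanly, including the degenerate case $a_{ij}=0$. The one point worth making explicit, which you gesture at, is that the claim ``concave in $x$, convex in $\log y$'' is a statement about two different coordinate systems with the other block of variables held fixed --- it is not a claim of joint concavity/convexity --- so decoupling the two claims as you do is the right framing. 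Your Hessian-as-covariance remark is also a valid alternative route and arguably more illuminating, since it shows the Hessian of $f_i$ in $\bz$ is $\mathrm{diag}(p) - pp^{\top}$ where $p_j \propto a_{ij}e^{z_j}$, which is positive semidefinite.
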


The added ingredient here is the matroid polytope constraint; however as is well known, matroid polytopes admit efficient separation oracles, and hence our relaxation can still be solved (within an arbitrarily small error) by standard techniques just like in \cite{AGSS17}.

\subsection{Rounding for matroid rank functions}

The main remaining question is how to round a fractional solution of $(P_1)$. In \cite{AGSS17}, items are allocated to players independently with probabilities $x_{ij}$, which is a natural choice. However, here it is not clear if this works due to the nonlinearity of the valuation functions. We need one additional ingredient, which is {\em contention resolution} \cite{CVZ14}.

\begin{Definition}
A $(b,c)$-balanced contention resolution for an independence system $\cM = ([m], \cI)$ (e.g. a matroid) is a procedure that for every $\bx \in b \cdot P(\cM)$ and $A \subseteq [m]$ returns a (random) set $\pi_\bx(A) \subseteq A$ such that 
\begin{itemize}
    \item $\pi_\bx(A) \in \cI$ with probability $1$, and
    \item if $R(\bx)$ is a random set where each element $j$ appears independently with probability $x_j$, then
    $$ \Pr[j \in \pi_\bx(R(\bx)) \mid j \in R(\bx)] \geq c.$$
\end{itemize}
The scheme is said to be monotone if $\Pr[i \in \pi_\bx(A_1)] \ge \Pr[i \in \pi_\bx(A_2)]$
whenever $i \in A_1 \subseteq A_2$.
\end{Definition}

Contention resolution schemes are known now for various types of constraints. We mention the following result, most relevant to our work here \cite{CVZ14}.

\begin{theorem}
\label{thm:CRS-matroid}
For any matroid and any $b \in (0,1]$, there is a monotone $(b, \frac{1-e^{-b}}{b})$-balanced contention resolution scheme.
\end{theorem}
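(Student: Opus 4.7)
The plan is to exhibit a randomized rounding procedure $\pi_\bx$ and then verify both the independence constraint and the marginal-probability bound. First I would normalize: since $\bx \in b \cdot P(\cM)$, the vector $\bx/b$ lies in $P(\cM)$ and so admits a convex decomposition $\bx/b = \sum_k \lambda_k \b1_{I_k}$ into indicators of independent sets of $\cM$.

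For the scheme itself I would attach i.i.d.\ uniform labels $Y_j \in [0,1]$, one per element: given $A \subseteq [m]$, let $\pi_\bx(A)$ be the greedy independent set obtained by scanning the elements of $A$ in increasing order of $Y_j$ and including each element whenever the current partial set stays in $\cI$. Then $\pi_\bx(A) \in \cI$ with probability one. Monotonicity is immediate: shrinking $A$ can only remove earlier competitors, so any $j \in A_1 \subseteq A_2$ that was accepted in the $A_2$-run is still accepted in the $A_1$-run.

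The heart of the argument is the marginal bound $\Pr[j \in \pi_\bx(R(\bx)) \mid j \in R(\bx)] \geq (1-e^{-b})/b$. Conditioning on $j \in R(\bx)$ and $Y_j = t$, element $j$ is rejected iff the set $\{i \in R(\bx)\setminus\{j\} : Y_i < t\}$ already spans $j$ in $\cM$. Since each $i \neq j$ appears in this ``earlier'' set independently with probability $t\, x_i$, I would control the non-spanning probability via the decomposition $\bx/b = \sum_k \lambda_k \b1_{I_k}$ together with the matroid exchange property. A convexity / FKG argument applied to $p \mapsto (1-(1-p)^n)/p$, combined with the rank constraint $\bx/b \in P(\cM)$, yields a survival rate of at least $e^{-bt}$ at threshold $t$; integrating $\int_0^1 e^{-bt}\,dt = (1-e^{-b})/b$ delivers the claimed bound.

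The main obstacle is the combinatorial bookkeeping of the blocking event under the matroid's exchange structure: relating ``spanned by the earlier sampled subset'' to the rank function on a random subset is delicate, since the set of blockers of $j$ is itself a random independence system depending on the labels. The cleanest workaround I see is to reframe the greedy-with-labels procedure as a continuous-time \emph{measured continuous greedy} on $R(\bx)$, which turns the blocking analysis into an exponential survival estimate and recovers the factor $(1-e^{-b})/b$ directly from the fractional rank constraint, while preserving monotonicity under enlargement of $A$.
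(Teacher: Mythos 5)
The paper does not prove Theorem~\ref{thm:CRS-matroid}; it quotes the result directly from Chekuri, Vondr\'ak, and Zenklusen \cite{CVZ14}. So the relevant question is whether your self-contained argument is correct, and unfortunately it is not: the scheme you propose provably fails to achieve the stated balance factor.

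Your scheme is ``random-order greedy'' with i.i.d.\ uniform labels $Y_j$, applied obliviously to $\bx$. Here is a concrete counterexample. Let $\cM$ be the rank-$1$ uniform matroid on $\{1,2\}$, take $b = 1$, and let $\bx = (0.01,\ 0.99) \in P(\cM)$. Conditioned on $1 \in R(\bx)$, element $1$ survives your greedy iff $2 \notin R(\bx)$ or $Y_1 < Y_2$, so
$$\Pr[1 \in \pi_\bx(R(\bx)) \mid 1 \in R(\bx)] \;=\; 0.01 + 0.99 \cdot \tfrac12 \;=\; 0.505 \;<\; 1 - 1/e.$$
The uniform-label greedy treats all elements symmetrically regardless of their $x_j$-values, whereas an optimal CRS must be biased in favor of the low-probability elements (here, giving element $1$ priority over element $2$ whenever both appear). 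The CVZ scheme encodes exactly such a bias and therefore depends on $\bx$ in an essential way, not just through independent uniform tie-breaking.

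The failure is already visible in your intermediate estimate. You assert a lower bound $\Pr[j\ \mbox{not spanned by}\ \{i \in R(\bx) : Y_i < t\}] \geq e^{-bt}$. For the rank-$1$ matroid this reads $\prod_{i\neq j}(1 - t x_i) \geq e^{-bt}$, which is false in general: in the example above at $t=1$ it asks for $0.01 \geq e^{-1}$. The elementary inequality $1-u \leq e^{-u}$ goes the opposite way, and no convexity/FKG argument will reverse it, because the statement is simply untrue for unbalanced $\bx$. You also introduce the decomposition $\bx/b = \sum_k \lambda_k \b1_{I_k}$ but never use it in the scheme or the analysis; in the actual CVZ proof that decomposition (together with an LP characterization of the optimal offline CRS and a correlation-gap bound for matroid rank functions) is the heart of the argument. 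The closing ``measured continuous greedy'' remark is a sketch of intent, not a proof, and does not address the counterexample. In short, the proposal replaces the cited scheme with a strictly weaker one and asserts the same guarantee; this gap is fatal.
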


We are going to use contention resolution to make sure that every player receives a set independent in their respective matroid. We need to proceed carefully, since we need to combine contention resolution with the analysis of the product of valuation functions, so any dependencies between players (as in the rounding procedure of \cite{AGSS17}) are potentially dangerous. We design the rounding procedure as follows.

\

\paragraph{Rounding Procedure 1 \label{rounding:CR-matroid}: Rounding for matroid rank functions.}
Given a fractional solution $\bx \in \RR^{n \times m}$,
\begin{itemize}
    \item For each $(i,j) \in [n] \times [m]$ independently, set $X_{ij} = 1$ with probability $x_{ij}$ and $X_{ij} = 0$ otherwise.
    \item For each $j \in [m]$ independently, apply contention resolution in the uniform rank-1 matroid on $[n]$ to the set $P_j = \{i: X_{ij} = 1\}$, to obtain a singleton $\{p_j\}$ (the player tentatively receiving item $j$).
    \item For each $i \in [n]$ independently, apply contention resolution in matroid $\cM_i$ to the set $S_i = \{j: X_{ij} = 1\}$, to obtain an independent set $I_i \in \cI_i$ (the set tentatively allocated to player $i$).
    \item Allocate item $j$ to player $p_j$, if $j \in I_i$.
\end{itemize}

We remark that the 3rd bullet point and the condition $j \in I_i$ are not crucial for the algorithm to work (and in fact the algorithm might potentially lose some value by applying it). However, it ensures that each player receives an independent set, which will be useful in the analysis. 

We now proceed to analyze the algorithm. Let us set some notation. Let $Y_{ij}$ denote the indicator variable for the event that $p_j = i$ (i.e., the $(i,j)$ pair survives the contention resolution in the rank-1 matroid). Let $Z_{ij}$ denote the indicator variable for the event that $j \in S_i$  (i.e., the $(i,j)$ pair survives the contention resolution in matroid $\cM_i$). Observe that $Y_{ij} Z_{ij}$  is the indicator of the event that player $i$ actually receives item $j$. 

\begin{lemma}
\label{lem:distinct-YZ}
Let $j_1,\ldots,j_n \in [m]$ be distinct items. Then
$$ \E\left[\prod_{i=1}^{n} (Y_{i j_i} Z_{i j_i})\right] \geq (1-1/e)^{2n} \prod_{i=1}^{n} x_{i j_i}.$$
\end{lemma}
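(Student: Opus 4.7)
I would condition on the diagonal event $E := \{X_{i,j_i} = 1 \text{ for all } i \in [n]\}$, which has probability $\prod_{i=1}^{n} x_{i,j_i}$ by mutual independence of the $X_{ij}$'s. Both $Y_{i,j_i}$ and $Z_{i,j_i}$ vanish whenever $X_{i,j_i} = 0$, so the product $\prod_i Y_{i,j_i} Z_{i,j_i}$ is identically zero off $E$. Consequently $\E[\prod_i Y_{i,j_i} Z_{i,j_i}] = \Pr[E] \cdot \E[\prod_i Y_{i,j_i} Z_{i,j_i} \mid E]$, and it suffices to show that the conditional expectation is at least $(1-1/e)^{2n}$.

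The structural observation that drives the argument is that $Y_{i,j_i}$ depends only on column $j_i$ of the matrix $X$ (together with the independent CRS randomness used on that column), whereas $Z_{i,j_i}$ depends only on row $i$. Because the $j_i$'s are distinct and the rows $1,\ldots,n$ are distinct, $\{Y_{i,j_i}\}_i$ is mutually independent, and so is $\{Z_{i,j_i}\}_i$. Conditioning on $E$ only fixes the entries on the ``diagonal'' $(i,j_i)$, so it leaves this independence intact. Therefore $\E[\prod_i Y_{i,j_i} \mid E] = \prod_i \Pr[Y_{i,j_i} = 1 \mid X_{i,j_i} = 1]$, and each factor is at least $1-1/e$ by Theorem~\ref{thm:CRS-matroid} applied to the rank-$1$ matroid on $[n]$ with $b=1$ (legitimate because $\sum_i x_{i,j_i} \leq 1$). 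The analogous argument using the CRS for $\cM_i$ with $b=1$ and $\bx_i \in P(\cM_i)$ yields $\E[\prod_i Z_{i,j_i} \mid E] \geq (1-1/e)^n$.

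The remaining and most delicate step is combining these into a bound on the joint product $\prod_i Y_{i,j_i} Z_{i,j_i}$, which does not factor immediately because row $i'$ and column $j_i$ share the off-diagonal entry $X_{i',j_i}$ whenever $i' \neq i$. I would handle this by first averaging out the independent CRS randomness, writing $\tilde A(X) := \E[\prod_i Y_{i,j_i} \mid X]$ and $\tilde B(X) := \E[\prod_i Z_{i,j_i} \mid X]$; since conditional on $X$ the two families of CRS randomness are independent, $\E[\prod_i Y_{i,j_i} Z_{i,j_i} \mid X] = \tilde A(X) \tilde B(X)$. By the monotonicity clause in the CRS definition, both $\tilde A$ and $\tilde B$ are non-negative and coordinatewise non-increasing in the off-diagonal entries of $X$ (adding a competitor can only decrease a survival probability). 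Conditional on $E$, the off-diagonal entries remain an independent product distribution, so the FKG inequality gives $\E[\tilde A \tilde B \mid E] \geq \E[\tilde A \mid E] \cdot \E[\tilde B \mid E] \geq (1-1/e)^{2n}$, completing the proof.

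The main obstacle is the FKG step: one has to verify carefully that conditioning on $E$ preserves the product structure of the remaining $X$-entries and that $\tilde A$, $\tilde B$ are monotone in the \emph{same} direction with respect to those entries. Both points follow from the setup, but they rely on the fact that the CRS is monotone and that $E$ fixes coordinates of an otherwise independent product law; without monotone CRS, the argument would break down at exactly this point.
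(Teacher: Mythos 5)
Your proof is correct and follows essentially the same strategy as the paper: condition on the diagonal event $\cX=E$, average out the independent CRS randomness to obtain deterministic non-increasing functions of the off-diagonal entries of $X$, and then apply the FKG inequality on the remaining product space. The only cosmetic difference is that you exploit the mutual independence within the $Y$-family (disjoint columns) and the $Z$-family (disjoint rows) to factor each side first and invoke FKG once between $\tilde A$ and $\tilde B$, whereas the paper applies FKG repeatedly across all $2n$ monotone factors; both routes yield the identical bound $(1-1/e)^{2n}\prod_i x_{ij_i}$.
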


\begin{proof}
Let $\cX$ denote the event that $X_{i j_i} = 1$ for every $i \in [n]$. This is a necessary condition for $\prod_{i=1}^{n} (Y_{i j_i} Z_{i j_i}) = 1$ to happen, and by independence, $\Pr[\cX] = \prod_{i=1}^{n} x_{i j_i}$. Hence our goal is to prove that $\E\left[\prod_{i=1}^{n} (Y_{i j_i} Z_{i j_i}) \mid \cX \right] \geq (1-1/e)^{2n}$.

Note that for each $i'$, $X_{i' j_{i'}}$ is the only variable among $(X_{i j_i})_{i=1}^{n}$ that participates in the contention resolution scheme in $\cM_{i'}$, and similarly it is the only variable that participates in the contention resolution scheme on $\{(i,j_{i'}): 1 \leq i \leq n\}$. Therefore, for the purposes of these contention resolution schemes, conditioning in $\cX$ is equivalent to conditioning on $X_{i' j_{i'}} = 1$. By the properties of contention resolution in matroids, we have $\Pr[Y_{i j_i} = 1 \mid \cX] \geq 1-1/e$ and $\Pr[Z_{i j_i} = 1 \mid \cX] \geq 1-1/e$. However, there are dependencies between these events for different values of $i$ (since conditioning on $Y_{ij}, Z_{ij}$ gives information about other elements participating in the contention resolution schemes), so the lemma doesn't follow immediately.

We need the additional property that our contention resolution is {\em monotone}, that is elements are less likely to survive when selected from a larger set. More formally, for $P \subseteq [n]$ such that $i \in P$, and $S \subseteq [m]$ such that $j \in S$, let 
$$\pi_{ij}(P) = \Pr[Y_{ij} = 1 \mid \cX \ \& \  P_j = P],$$
$$\sigma_{ij}(S) = \Pr[Z_{ij} = 1 \mid \cX \ \& \  S_i = S].$$
From Theorem~\ref{thm:CRS-matroid}, $\pi_{ij}(P)$ is a non-increasing function of $P$ (among sets $P$ such that $i \in P$), and $\sigma_{ij}(S)$ is a non-increasing function of $S$ (among sets $S$ such that $j \in S$).
Also, note that once the variables $X_{ij}$ are fixed, the sets $P_j, S_i$ are determined, and each contention resolution procedure proceeds independently. Hence, we have
$$ \E\left[\prod_{i=1}^{n} Y_{i j_i} Z_{i j_i} \mid \cX \right] = \E\left[\prod_{i=1}^{n} \pi_{i j_i}(P_{j_i}) \sigma_{i j_i}(S_i) \mid \cX \right].$$

Now we invoke the FKG inequality: For non-decreasing functions $\pi(\omega), \sigma(\omega)$ on a product space $\Omega$ with a product measure, $\E_\Omega[\pi(\omega) \sigma(\omega)] \geq \E_\Omega[\pi(\omega)] \E_\Omega[\sigma(\omega)]$. Here, the product space is the space of random variables $X_{ij}$ conditioned on $\cX$ (which still keeps the remaining random varables $\{X_{ij}, j \neq j_i\}$ independent). Hence, by applying the FKG inequality repeatedly,
\begin{align*} &\E\left[\prod_{i=1}^{n} \pi_{i j_i}(P_{j_i}) \sigma_{i j_i}(S_i) \mid \cX \right] \\
\geq &\prod_{i=1}^{n} \left( \E\left[\pi_{i j_i}(P_{j_i}) \mid \cX \right] \cdot \E\left[\sigma_{i j_i}(S_i) \mid \cX \right] \right).
\end{align*}

Finally, from Theorem~\ref{thm:CRS-matroid}, we get
$$ \E_{P_j}[ \pi_{ij}(P_j) \mid \cX] = \Pr[Y_{ij} = 1 \mid \cX] \geq 1 - {1}/{e},$$
$$ \E_{S_i}[ \sigma_{ij}(S_i) \mid \cX] = \Pr[Z_{ij} = 1 \mid \cX] \geq 1 - {1}/{e}.$$
This completes the proof that $\E\left[\prod_{i=1}^{n} (Y_{i j_i} Z_{i j_i}) \mid \cX \right] \geq (1-1/e)^{2n}$.
\end{proof}

Now we complete the analysis of the rounding procedure. We need the following key lemma from \cite{AGSS17} (a generalization of a lemma by Gurvits).

\begin{lemma}
\label{lem:Gurvits}
Let $p(y_1,\ldots,y_m)$ be a homogeneous degree-$n$ real stable polynomial with nonnegative coefficients. For $S \subseteq [m]$, let $c_S$ denote the coefficient of the monomial $y^S$. Then
$$ \sum_{S \in {[m] \choose n}} c_S \geq e^{-n} \inf_{y>0: \forall S \in {[m] \choose n}, y^S \geq 1} p(y_1,\ldots,y_m).$$
\end{lemma}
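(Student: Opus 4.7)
The plan is to argue by induction on the degree $n$, following Gurvits' capacity-reduction strategy. Write $\mathrm{Cap}(p) := \inf_{\by > 0,\, y^S \geq 1\, \forall |S|=n} p(\by)$, so the goal becomes $\sum_{|S|=n} c_S \geq e^{-n}\, \mathrm{Cap}(p)$. The base case $n=0$ is trivial. For the inductive step, define the derived polynomials $p_i(\by_{-i}) := (\partial p/\partial y_i)\big|_{y_i = 0}$ for each $i \in [m]$. Each $p_i$ is a real stable, homogeneous, degree-$(n-1)$ polynomial in $m-1$ variables with nonnegative coefficients (both partial differentiation and the substitution $y_i=0$ preserve these properties), and its multilinear degree-$(n-1)$ coefficients are precisely the $p$-coefficients $c_S$ for $|S|=n$ with $i\in S$. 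Counting contributions,
\begin{align*}
n \sum_{|S|=n} c_S &= \sum_{i=1}^{m} \sum_{|T|=n-1} (c_T)_{p_i} \\
&\geq e^{-(n-1)} \sum_{i=1}^{m} \mathrm{Cap}(p_i)
\end{align*}
by the inductive hypothesis, so it suffices to prove the capacity inequality $\sum_{i=1}^m \mathrm{Cap}(p_i) \geq (n/e)\,\mathrm{Cap}(p)$.

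To prove this capacity inequality, I would reduce to a univariate problem. Fix an (approximately) optimal $\by^*$ for $\mathrm{Cap}(p)$. For each $i \in [m]$, the polynomial $q_i(t) := p(y^*_1,\ldots,t,\ldots,y^*_m)$ is real-rooted (by stability of $p$) with nonnegative coefficients and degree at most $n$. Its coefficient of $t$ is $p_i(\by^*_{-i})$. Gurvits' classical univariate lemma then gives $p_i(\by^*_{-i}) \geq (1 - 1/n)^{n-1}\, \inf_{t>0} q_i(t)/t$. On the other side, I would check that the restricted vector $\by^*_{-i}$, after rescaling by a factor of $(y^*_i)^{1/(n-1)}$, is feasible for the $p_i$-capacity problem: the key combinatorial observation is that from $y^{T\cup\{i\}}\geq 1$ for every $(n-1)$-subset $T\subseteq[m]\setminus\{i\}$, one gets $(y^*_{-i})^T \geq 1/y^*_i$, so the rescaling restores the required constraint. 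Moreover, the same observation implies that if $(y^*_{-i})^T \geq 1$ for all $(n-1)$-subsets $T\subseteq[m]\setminus\{i\}$, then automatically $(y^*_{-i})^S \geq 1$ for all $n$-subsets $S\subseteq[m]\setminus\{i\}$, by the AM-GM identity $((y^*_{-i})^S)^{n-1} = \prod_{T\subset S,\,|T|=n-1}(y^*_{-i})^T \geq 1$. Using $(n-1)$-homogeneity of $p_i$ and summing over $i$ against Euler's identity $\sum_i y^*_i\, \partial p/\partial y_i(\by^*) = n\,p(\by^*)$ yields the claimed factor $n\cdot(1-1/n)^{n-1} \geq n/e$, and iterating the induction $n$ times telescopes $\prod_k (1-1/k)^{k-1} = n!/n^n \geq e^{-n}$ by Stirling.

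The main obstacle I anticipate is the directionality and bookkeeping of the inequality $\sum_i \mathrm{Cap}(p_i) \geq (n/e)\,\mathrm{Cap}(p)$: the natural bound produced by the univariate Gurvits lemma combined with the rescaling argument yields $\mathrm{Cap}(p_i)$ as an \emph{upper} bound in terms of $y^*_i\, p_i(\by^*_{-i})$, whereas we need a \emph{lower} bound. Converting between the two requires either a duality/KKT argument on the capacity-defining optimization problem, or an inductive formulation that directly tracks the ratio $\mathrm{Cap}(p_i)/\mathrm{Cap}(p)$ rather than each side separately. The ``rectangular'' nature of the generalized capacity constraint (a family $\{y^S \geq 1: |S|=n\}$ rather than Gurvits' single equation $\prod y_j = 1$) is what makes this reduction non-trivial compared to the classical square case, and the combinatorial AM-GM observation above is what ultimately allows the induction to close.
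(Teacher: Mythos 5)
This lemma is cited from \cite{AGSS17} without proof in the paper, so your proposal has to stand on its own. Your reduction is set up correctly: the bookkeeping identity $n\sum_{|S|=n}c_S = \sum_{i=1}^m\sum_{|T|=n-1}(c_T)_{p_i}$ holds (each $c_S$ is counted once for each $i\in S$), the polynomials $p_i=(\partial p/\partial y_i)|_{y_i=0}$ are indeed real stable, homogeneous of degree $n-1$, with nonnegative coefficients, and the induction would close if you could prove the capacity inequality $\sum_{i=1}^m\mathrm{Cap}(p_i)\geq n(1-1/n)^{n-1}\mathrm{Cap}(p)$. That inequality is the entire content of the lemma, and the argument you sketch for it produces a bound in the opposite direction, as you yourself flag at the end.

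Concretely: taking $\by^*$ near-optimal for $\mathrm{Cap}(p)$, your rescaling observation shows that $(y^*_i)^{1/(n-1)}\,\by^*_{-i}$ is feasible for the $\mathrm{Cap}(p_i)$ problem, and by $(n-1)$-homogeneity $p_i$ evaluates there to $y^*_i\,p_i(\by^*_{-i})$. But exhibiting a feasible point only \emph{upper}-bounds an infimum, so this gives $\mathrm{Cap}(p_i)\leq y^*_i\,p_i(\by^*_{-i})$. Summing over $i$ and using Euler's identity together with $(\partial p/\partial y_i)(\by^*)\geq p_i(\by^*_{-i})$ (termwise, by nonnegativity) yields only $\sum_i\mathrm{Cap}(p_i)\leq n\,p(\by^*)=n\,\mathrm{Cap}(p)$, the reverse of what you need. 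To get a lower bound on $\mathrm{Cap}(p_i)$ you would have to reason about an \emph{arbitrary} feasible $\bz_{-i}$, and there is no canonical way to lift such a $\bz_{-i}$ to a feasible $\bz$ for $\mathrm{Cap}(p)$ while relating $p_i(\bz_{-i})$ to $p(\bz)$ in the right direction; worse, the $m$ near-minimizers for the various $\mathrm{Cap}(p_i)$ problems are unrelated to each other, so they cannot all be tied to a single point where Euler's identity applies. This is precisely where the jump from Gurvits' $m=n$ setting to $m>n$ is non-trivial: classically, with the single constraint $\prod_j y_j=1$, one writes $\mathrm{Cap}(p)=\inf_{y>0}p(y)/\prod_j y_j$ and eliminates the deleted coordinate by a one-dimensional optimization to which the univariate Gurvits lemma applies directly, giving the needed lower bound on the reduced capacity. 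That manipulation does not carry over to the constraint family $\{y^S\geq 1:|S|=n\}$, and no substitute is supplied; the duality/KKT and ratio-tracking fixes are only mentioned, not carried out. As written, the proof is incomplete at exactly the step you identified.
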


We are going to apply this lemma to the polynomial $p(y_1,\ldots,y_m) = \prod_{i=1}^{n} \left( \sum_{j=1}^{m} w_{ij} x_{ij} y_j \right)$ which is real-stable (just like in \cite{AGSS17}). We prove the following.

\begin{theorem}
\label{thm:matroid-round}
The outcome of Rounding Procedure 1 has expected value at least $\frac{1}{e} (1-\frac{1}{e})^2 OPT$.
\end{theorem}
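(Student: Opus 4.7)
My plan is to lower bound the expected \emph{product} $\E[\prod_{i=1}^{n} v_i(T_i)]$, where $T_i = \{j : Y_{ij} Z_{ij} = 1\}$ is the random set that Rounding Procedure 1 allocates to player $i$, by $\bigl(\tfrac{1}{e}(1-\tfrac{1}{e})^2 \cdot OPT\bigr)^n$. Since some realization must attain at least the expected product, such a realization has geometric mean at least $\tfrac{1}{e}(1-\tfrac{1}{e})^2 \cdot OPT$, which is the content of the theorem.

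First I would note that $T_i \subseteq I_i \in \cI_i$ by construction, so
$$v_i(T_i) \geq \sum_{j=1}^{m} w_{ij} Y_{ij} Z_{ij}.$$
Expanding the product over $i$,
$$\prod_{i=1}^{n} v_i(T_i) \geq \sum_{j_1,\dots,j_n \in [m]} \prod_{i=1}^{n} w_{i j_i} Y_{i j_i} Z_{i j_i}.$$
The rank-$1$ contention resolution on each item outputs a unique winner, so $Y_{ij} Y_{i'j} \equiv 0$ whenever $i \neq i'$; hence tuples with a repeated index contribute zero and only $n$-tuples of \emph{distinct} items survive. Taking expectations and applying Lemma~\ref{lem:distinct-YZ} to each surviving tuple yields
$$\E\Bigl[\prod_{i=1}^{n} v_i(T_i)\Bigr] \geq (1-1/e)^{2n} \sum_{\text{distinct } j_1,\dots,j_n} \prod_{i=1}^{n} w_{i j_i} x_{i j_i}.$$

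Next, taking $\bx$ to be an optimizer of $(P_1)$, I would introduce
$$p(\by) = \prod_{i=1}^{n} \Bigl( \sum_{j=1}^{m} w_{ij} x_{ij} y_j \Bigr),$$
which is homogeneous of degree $n$, has nonnegative coefficients, and is real stable as a product of nonnegative linear forms. Its coefficient $c_S$ on the multilinear monomial $y^S$ collects exactly the $n!$ orderings of the elements of $S$, so $\sum_{S \in \binom{[m]}{n}} c_S$ equals the distinct-tuple sum above. Lemma~\ref{lem:Gurvits} gives $\sum_S c_S \geq e^{-n} \inf_{\by : y^S \geq 1} p(\by)$, and the inner infimum is precisely the value of the $(P_1)$ objective at $\bx$, which by the relaxation lemma is at least $OPT^n$. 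Chaining the estimates yields $\E[\prod_i v_i(T_i)] \geq e^{-n}(1-1/e)^{2n}\, OPT^n$, from which the theorem follows.

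The only delicate step is the first one: verifying that the rank-$1$ contention resolution cleanly annihilates every tuple with a repeated index, so that the surviving sum aligns exactly with the multilinear coefficients of the stable polynomial $p(\by)$. Once that combinatorial alignment is established, the remainder is a routine chaining of the monotone CRS estimate (Lemma~\ref{lem:distinct-YZ}), Gurvits' bound (Lemma~\ref{lem:Gurvits}), and the $(P_1)$ relaxation guarantee.
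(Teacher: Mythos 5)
Your proof follows the same route as the paper's: expand $\prod_i v_i(T_i)$ as a sum over $n$-tuples of items, lower bound by the sum over distinct tuples, apply Lemma~\ref{lem:distinct-YZ} tuple-wise, identify the resulting quantity with $\sum_{S\in\binom{[m]}{n}} c_S$ for the real-stable polynomial $p(\by)=\prod_i(\sum_j w_{ij}x_{ij}y_j)$, and finish with Lemma~\ref{lem:Gurvits} and the $(P_1)$ relaxation guarantee. The chain of inequalities is identical and the conclusion matches.

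One small remark: you flag the restriction to distinct tuples as ``the only delicate step,'' citing that $Y_{ij}Y_{i'j}\equiv 0$ for $i\neq i'$ so repeated-index tuples vanish. That fact is true (and the paper notes it parenthetically as giving an equality), but it is not actually needed: since every term $\prod_i w_{ij_i}Y_{ij_i}Z_{ij_i}$ is nonnegative, dropping the repeated-index tuples is already a valid lower bound. The identification with $\sum_S c_S$ is a purely algebraic identity about the distinct-tuple sum and does not depend on what the rounding does to repeated tuples. So the step you worry about is the trivial one; the genuinely delicate step is the FKG argument inside Lemma~\ref{lem:distinct-YZ}, which you (correctly) treat as a black box.
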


\begin{proof}
Denote by $J_i$ the set allocated to player $i$, i.e. $J_i = \{ j \in I_i: p_j = i \}$.
Since $J_i \subseteq I_i \in \cI_i$, $J_i$ is also in $\cI_i$; hence the value obtained by player $i$ is $v_i(J_i) = \sum_{j \in J_i} w_{ij}$. It remains to analyze the expected value of the product $\prod_{i=1}^{n} v_i(J_i)$.
\begin{eqnarray*}
\E\left[ \prod_{i=1}^{n} v_i(J_i) \right] = \E\left[ \prod_{i=1}^{n} \sum_{j \in J_i} w_{ij} \right]
 = \E\left[ \prod_{i=1}^{n} \sum_{j=1}^{m} w_{ij} Y_{ij} Z_{ij} \right]
\end{eqnarray*}
where $Y_{ij}, Z_{ij}$ are the indicator variables of $j \in I_i$ and $p_j = i$ as above.
We continue,
\begin{align*}
\E\left[ \prod_{i=1}^{n} \sum_{j=1}^{m} w_{ij} Y_{ij} Z_{ij} \right]
 = & \sum_{j_1,\ldots,j_n=1}^{m} \E\left[ \prod_{i=1}^{n} w_{i j_i} Y_{i j_i} Z_{i j_i} \right] \\
 \geq & \sum_{ \mbox{\tiny distinct } j_1,\ldots,j_n} \E\left[ \prod_{i=1}^{n} w_{i j_i} Y_{i j_i} Z_{i j_i} \right].
\end{align*}
(The last inequality is in fact an equality, because the same item cannot be allocated to multiple players, but we don't need that here.) We appeal to Lemma~\ref{lem:distinct-YZ}: For every choice of distinct $j_1,\ldots,j_n$, $\E\left[ \prod_{i=1}^{n} Y_{i j_i} Z_{i j_i} \right] \geq (1-1/e)^{2n} \prod_{i=1}^{n} x_{i j_i}$. Therefore,
\begin{align*}
&\sum_{ \mbox{\tiny distinct } j_1,\ldots,j_n} \E\left[ \prod_{i=1}^{n} w_{i j_i} Y_{i j_i} Z_{i j_i} \right] \\
\geq &~(1-1/e)^{2n} \sum_{\mbox{\tiny distinct } j_1,\ldots,j_n}  \prod_{i=1}^{n} w_{i j_i} x_{i j_i}.
\end{align*}
The last expression is exactly $(1-1/e)^{2n}$ times the summation of coefficients for monomials $y^S, S \in {[m] \choose n}$ in the polynomial $p(y_1,\ldots,y_m) = \prod_{i=1}^{n} \left( \sum_{j=1}^{m} w_{ij} x_{ij} y_j \right)$. By Lemma~\ref{lem:Gurvits},
\begin{align*}
&\sum_{ \mbox{\tiny distinct } j_1,\ldots,j_n } \prod_{i=1}^{n} w_{i j_i} x_{i j_i} \\
 \geq & ~e^{-n}  \inf_{y>0: \forall S \in {[m] \choose n}, y^S \geq 1} \prod_{i=1}^{n} \left( \sum_{j=1}^{m} w_{ij} x_{ij} y_j \right) \\ = & ~e^{-n} OPT
\end{align*}
for an optimal fractional solution $\bx$. We conclude that
\begin{eqnarray*}
\E\left[ \prod_{i=1}^{n} v_i(J_i) \right] \geq e^{-n} (1-1/e)^{2n} OPT.
\end{eqnarray*}

\end{proof}

\subsection{Sums of weighted matroid rank functions}

In this section we extend our result to the cone generated by matroid rank functions, or equivalently the sums of weighted matroid rank functions. Consider the rounding method that skips the third bullet point in Rounding 1. 

\paragraph{Rounding Procedure 2 \label{rounding:CR-matroid}:}
Given a fractional solution $\bx \in \RR^{n \times m}$,
\begin{itemize}
    \item For each $(i,j) \in [n] \times [m]$ independently, set $X_{ij} = 1$ with probability $x_{ij}$ and $X_{ij} = 0$ otherwise.
    \item For each $j \in [m]$ independently, apply contention resolution in the uniform rank-1 matroid on $[n]$ to the set $P_j = \{i: X_{ij} = 1\}$, to obtain a singleton $\{p_j\}$ (the player tentatively receiving item $j$).
    \item Allocate item $j$ to player $p_j$.
\end{itemize}

It's easy to see that the expected value of the objective function after this new rounding is at least the expected value after the original rounding. 

\begin{lemma}
\label{lem:independent-rounding}
For any weighted matroid rank functions, the outcome of Rounding Procedure 2 is at least the outcome of Rounding Procedure 1.
\end{lemma}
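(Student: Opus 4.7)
The plan is to give a direct coupling argument. Couple Rounding Procedures 1 and 2 so that they share the same random variables $X_{ij}$ and the same contention‑resolution outcome $p_j$ in the rank‑$1$ matroid on $[n]$ (Procedure 2 uses no additional randomness, while Procedure 1 also samples the matroid‑$\cM_i$ contention resolutions producing $I_i$). Under this coupling, let $J_i^{(1)}=\{j\in I_i : p_j=i\}$ denote the set allocated by Procedure 1 and $J_i^{(2)}=\{j : p_j=i\}$ the set allocated by Procedure 2.

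The key observation is that the matroid contention‑resolution step only \emph{removes} items: by definition $I_i\subseteq S_i=\{j:X_{ij}=1\}$, and therefore
\[
J_i^{(1)}=\{j\in I_i: p_j=i\}\ \subseteq\ \{j: p_j=i\}=J_i^{(2)}.
\]
This inclusion holds pointwise, for every outcome of the coupled randomness.

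Next, invoke the fact that every weighted matroid rank function $v_i$ is monotone nondecreasing, so $v_i(J_i^{(1)})\le v_i(J_i^{(2)})$ on every sample path. Since all valuations are nonnegative, the products are also comparable pointwise:
\[
\prod_{i=1}^n v_i(J_i^{(1)})\ \le\ \prod_{i=1}^n v_i(J_i^{(2)}).
\]
Taking expectations of both sides under the coupled measure yields
\[
\E\!\left[\prod_{i=1}^n v_i(J_i^{(1)})\right]\ \le\ \E\!\left[\prod_{i=1}^n v_i(J_i^{(2)})\right],
\]
which is exactly the statement of the lemma.

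There is no real obstacle here; the only thing to be careful about is the bookkeeping of the coupling, making sure that the distributions of $(X,p)$ match in both procedures so that the pointwise inequality genuinely transfers to expectations. Monotonicity (which holds for weighted matroid rank functions, though not for arbitrary valuations) is what makes the argument go through; this is also why the extra contention‑resolution step can be safely dropped in this particular case without hurting the analysis from Theorem~\ref{thm:matroid-round}, so that the $\tfrac{1}{e}(1-\tfrac{1}{e})^2$ guarantee carries over to Rounding Procedure 2.
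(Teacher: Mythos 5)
Your proof is correct and takes essentially the same route as the paper's: couple the shared randomness so that $J_i^{(1)}\subseteq J_i^{(2)}$ pointwise, then invoke monotonicity of weighted matroid rank functions and take expectations. You simply spell out the coupling and the pointwise-to-expectation step that the paper leaves implicit.
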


\begin{proof}
Obviously, the set allocated to each agent $i$ in Rounding Procedure 2 contains the set allocated in Rounding Procedure 1. By monotonicity of the valuations, the value obtained by Rounding Procedure 2 dominates the value obtained by Rounding Procedure 1.
\end{proof}

An interesting point about this new rounding is that it only depends on the fractional solution, in other words, oblivious to the matroids. So we can use this rounding more generally for sums of matroid rank functions. Let the valuation of agent $i$ be the summation of several weighted matroid rank functions, $$v_{i}(\bx_i) = \sum_{k=1}^{s_i} \alpha_k v_{ik}(\bx_i)$$ where $$v_{ik}(\bx_i) = \max\{\sum_{j\in I} \omega_{ijk}: I\in \mathcal{L}_{ik}, \b1_I\le{\bf x}_i\}.$$ where $\mathcal{M}_{ik} = ([m], \mathcal{L}_{ik})$ is a matroid and $\omega_{ijk}\ge0$ for all $i\in[n]$, $j\in[m]$, $k\in[s_i]$. Similarly let $P(\mathcal{M}_{ik})$ be the associated matroid polytope, we have the following relaxation. Note that $\bz_{i*k}$ is the vector with $m$ elements $(z_{ijk})_{j=1}^m$.

\begin{align*}
&\max_{\bx, \bz}  \inf_{\by \in \RR_+^m: y^S\ge1,\forall S\in\binom{[m]}{n}} \prod_{i=1}^n&\left(\sum_{k=1}^{s_i}\sum_{j=1}^{m} w_{ijk} z_{ijk} y_j \right),\\
& s.t.~~~~~ \bz_{i*k} \in P(\cM_{ik}) & \forall i \in [n] , k\in [s_i]\\
& ~~~~~~~~~ \sum_{i=1}^{n} x_{ij}\le 1 & \forall j \in [m] \\
& ~~~~~~~~~ z_{ijk}\le x_{ij} & \forall i \in [n], \forall j \in [m], \forall k\in[s_i] \\
\end{align*}

It is easy to prove this is a relaxation of the Nash welfare maximization problem.

\begin{lemma}
\label{lem:relaxation-sum}
The optimal solution of the above program is at least the optimal solution of the Nash Social Welfare maximization problem.
\end{lemma}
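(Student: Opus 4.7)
The plan mirrors the proof that $(P_1)$ is a relaxation, with one extra bookkeeping index $k$ for the matroid components. Given an optimal integer NSW allocation $\bx \in \{0,1\}^{n\times m}$ with $\sum_{i} x_{ij} \le 1$, I would exhibit a feasible $(\bx,\bz)$ for the new program as follows. For each player $i$ and each component $k \in [s_i]$, pick an independent set $I_{ik} \in \mathcal{L}_{ik}$ contained in $\{j : x_{ij}=1\}$ that realizes $v_{ik}(\bx_i) = \sum_{j \in I_{ik}} \omega_{ijk}$ (such a set exists by definition of $v_{ik}$), and set $z_{ijk} = \b1[j \in I_{ik}]$. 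Then $\bz_{i*k}$ is the indicator of an independent set of $\cM_{ik}$, so it lies in $P(\cM_{ik})$, and $z_{ijk} \le x_{ij}$ by construction. Absorbing the outer coefficient as $w_{ijk} = \alpha_k \omega_{ijk}$, the per-player value decomposes cleanly as $v_i(\bx_i) = \sum_{k,j} w_{ijk} z_{ijk}$.

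Next, I would fix any $\by \in \RR_+^m$ with $y^S \ge 1$ for every $S \in \binom{[m]}{n}$ and expand
$$ \prod_{i=1}^n \left( \sum_{k=1}^{s_i} \sum_{j=1}^m w_{ijk} z_{ijk} y_j \right) \;=\; \sum_{(k_1,j_1),\ldots,(k_n,j_n)} \prod_{i=1}^n w_{i j_i k_i} z_{i j_i k_i} y_{j_i}. $$
The key observation, exactly as in the single-matroid case, is that $z_{ijk} \le x_{ij}$ combined with $\sum_i x_{ij} \le 1$ and the integrality of $\bx$ forces any term with two coinciding indices $j_i = j_{i'}$ ($i \neq i'$) to vanish, since $x_{i,j_i}$ and $x_{i',j_{i'}}$ cannot both equal $1$ at the same coordinate. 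Hence only tuples with pairwise distinct $j_1,\ldots,j_n$ contribute, and on those tuples $\prod_i y_{j_i} = y^{\{j_1,\ldots,j_n\}} \ge 1$.

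Dropping the $y$ factors on the surviving terms and re-adding the vanishing non-distinct tuples for free yields
$$ \prod_{i=1}^n \left( \sum_{k,j} w_{ijk} z_{ijk} y_j \right) \;\ge\; \sum_{(k_i,j_i)_i} \prod_{i=1}^n w_{i j_i k_i} z_{i j_i k_i} \;=\; \prod_{i=1}^n \sum_{k,j} w_{ijk} z_{ijk} \;=\; \prod_{i=1}^n v_i(\bx_i). $$
Since this lower bound is independent of $\by$, taking the infimum preserves it, so the objective of the relaxation at $(\bx,\bz)$ already meets the integer NSW value, and the supremum over feasible $(\bx,\bz)$ is therefore at least the integer optimum.

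I do not expect any real obstacle: this is a direct extension of the proof for $(P_1)$, and the extra summation over the component index $k$ sits entirely inside a single factor of the outer product over $i$, so the distinct-$j$ elimination argument transfers verbatim. The only step that requires genuine thought is pinning down that the packing constraint $z_{ijk} \le x_{ij}$ (rather than a direct constraint on $\bz$) is exactly the right hook for invoking the assignment constraint, which it is.
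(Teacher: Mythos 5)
Your proof is correct and follows the same route as the paper's: take the integer optimum, set $z_{ijk}$ as the indicator of the maximum-weight independent set in $\cM_{ik}$ realizing $v_{ik}(\bx_i)$, expand the product, drop to distinct $j$-tuples using $z_{ijk}\le x_{ij}$ together with $\sum_i x_{ij}\le1$, and invoke $y^S\ge1$. Your version is slightly more explicit about absorbing $\alpha_k$ into $w_{ijk}$ and about why the final product equals $\prod_i v_i(\bx_i)$, but these are bookkeeping details the paper leaves implicit; the argument is otherwise identical.
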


\begin{proof}
Suppose that $\bx \in \{0,1\}^{n \times m}$ is the optimal allocation. Let $\bz_{i*k} \in \{0,1\}^{m}$ be the indicator of a maximal independent set in matroid $\cM_{ik}$ such that $\bz_{i*k} \leq \bx_i$.
Then we have $\bz_{i*k} \in P(\cM_{ik})$, $\sum_{i=1}^{n} x_{ij} \leq 1$, $z_{ijk}\le x_{ij}$. Also, for every $\by \in \RR_+^m$ such that $y^S \geq 1$  for all $S \in {[m] \choose n}$, we have
\begin{align*}
& \prod_{i=1}^n\left(\sum_{k=1}^{s_i}\sum_{j=1}^{m} w_{ij} z_{ijk} y_j\right) \\
= & \sum_{\substack{k_1, \ldots, k_n \\ 1\le k_i\le s_i}}\sum_{j_{1},\ldots,j_n=1}^{m} \prod_{i=1}^{n} (w_{ij_ik_i} z_{ij_ik_i} y_{j_i}) \\
 = & \sum_{\substack{k_1, \ldots, k_n \\ 1\le k_i\le s_i}}\sum_{\substack{\mbox{\tiny distinct } \\ j_1,\ldots,j_n}} y^{\{j_1,\ldots,j_n\}} \prod_{i=1}^{n} (w_{ij_ik_i} z_{ij_ik_i}) \\  
\geq & \sum_{\substack{k_1, \ldots, k_n \\ 1\le k_i\le s_i}}\sum_{\substack{\mbox{\tiny distinct } \\ j_1,\ldots,j_n}} \prod_{i=1}^{n} (w_{ij_ik_i} z_{ij_ik_i})   \\  
= & \prod_{i=1}^n\left(\sum_{k=1}^{s_i}\sum_{j=1}^{m} w_{ijk_i} z_{ijk}\right) 
\end{align*}
where the inequality holds because of the constraint $y^S \geq 1$, and the equality between summations over all $n-$tuples and distinct $n$-tuples holds because we cannot have $z_{ijk} = z_{i' j'k'} = 1$ for $i \neq i'$ and $j = j'$ (each item is assigned to only one player).
Therefore, the optimum of the above program is at least the integer optimum.
\end{proof}

Moreover, similarly, by Lemma~\ref{lem:concave-convex} and standard techniques using separation oracles for matroid polytopes, we can solve this program efficiently (to arbitrary precision). Finally, we use the Rounding Procedure 2 to round a fractional solution of the program.

\begin{theorem}
The outcome of the Rounding 2 has expected value at least $\frac1e(1-\frac1e)^2OPT$.
\end{theorem}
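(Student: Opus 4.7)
The plan is to mimic the proof of Theorem~\ref{thm:matroid-round}, the twist being that Rounding Procedure 2 never actually performs any matroid contention resolution, so I will introduce auxiliary randomness to \emph{emulate} the matroid CRS inside the analysis only. For each triple $(i,j,k)$ introduce an independent Bernoulli variable $A_{ijk}$ with parameter $z_{ijk}/x_{ij}$, so that $X_{ij}A_{ijk}\sim\mathrm{Bernoulli}(z_{ijk})$ independently across $(i,j,k)$; consequently, for every $(i,k)$ the set $T_{ik}=\{j:X_{ij}A_{ijk}=1\}$ is drawn from the product Bernoulli measure with marginal $\bz_{i*k}\in P(\cM_{ik})$. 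Now apply the monotone $(1,1-1/e)$-balanced CRS for $\cM_{ik}$ to $T_{ik}$ independently for each $(i,k)$, let $Z_{ijk}$ be the indicator that $j$ survives it, and keep $Y_{ij}$ as the rank-1 player-side CRS indicator from Rounding 2.

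The first substantive step is the pointwise bound
\[
v_i(J_i)\;\geq\;\sum_{k=1}^{s_i}\sum_{j=1}^{m} w_{ijk}\,Z_{ijk}\,Y_{ij},
\]
with $w_{ijk}=\alpha_k\omega_{ijk}$: since $T_{ik}\subseteq S_i$, the set $\{j:Z_{ijk}Y_{ij}=1\}$ is a subset of $J_i$ that is independent in $\cM_{ik}$ (as a subset of the CRS output), and hence a feasible witness for $v_{ik}(J_i)$. Next I would establish the analogue of Lemma~\ref{lem:distinct-YZ}: for every distinct $j_1,\ldots,j_n$ and every $k_1,\ldots,k_n$,
\[
\E\Bigl[\prod_{i=1}^n Z_{ij_ik_i}Y_{ij_i}\Bigr]\;\geq\;(1-1/e)^{2n}\prod_{i=1}^n z_{ij_ik_i}.
\]
Condition on the product event $\cX'=\bigcap_i\{X_{ij_i}A_{ij_ik_i}=1\}$, whose probability is $\prod_i z_{ij_ik_i}$ by the mutual independence of the $X$'s across distinct items and of the $A$'s across distinct triples. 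Under $\cX'$, balanced-ness of the two CRS schemes gives $\Pr[Z_{ij_ik_i}=1\mid\cX']\geq 1-1/e$ and $\Pr[Y_{ij_i}=1\mid\cX']\geq 1-1/e$ individually, and monotonicity together with the FKG inequality on the remaining product measure of coin flips lifts these into the product across $i$, exactly as in Lemma~\ref{lem:distinct-YZ}.

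Finally, I would expand $\E[\prod_i v_i(J_i)]$ via the pointwise bound, restrict the resulting $n$-fold sum to distinct tuples (valid by nonnegativity), and recognize
\[
\sum_{k_1,\ldots,k_n}\sum_{\text{distinct }j_1,\ldots,j_n}\prod_{i=1}^n w_{ij_ik_i}z_{ij_ik_i}
\]
as $\sum_{S\in\binom{[m]}{n}}c_S$, the coefficient sum of the homogeneous degree-$n$ real-stable polynomial $p(\by)=\prod_i\bigl(\sum_k\sum_j w_{ijk}z_{ijk}y_j\bigr)$ (a product of linear forms with nonnegative coefficients), so that Lemma~\ref{lem:Gurvits} yields $\sum_S c_S\geq e^{-n}\cdot OPT$ and the total factor $\tfrac{1}{e}(1-\tfrac{1}{e})^2$ comes out. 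The main obstacle is choosing the coupling (the $A_{ijk}$'s) carefully enough that the conditional independence structure needed for the FKG step survives the extra conditioning on the $A$'s; once that bookkeeping is settled, the template of Theorem~\ref{thm:matroid-round} transfers verbatim and the only genuinely new work is tracking the summation over the extra index $k$.
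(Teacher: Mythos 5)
Your proposal is correct and follows the same high-level template as the paper's proof: expand the product of valuations over the index tuples $(k_1,\ldots,k_n)$, obtain a CRS-based pointwise lower bound on $\prod_i v_{ik_i}(J_i)$, apply FKG over the conditioned product space, restrict to distinct item tuples, and finish with the generalized Gurvits lemma (Lemma~\ref{lem:Gurvits}).

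The place where you deviate, and in fact improve on the paper's exposition, is how you justify the step that the paper phrases as ``Since $x_{ij}\ge z_{ijk}$, the outcome of assigning with probability $x_{ij}$ is at least the outcome of assigning with probability $z_{ijk_i}$, and is further lower-bounded by Rounding Procedure 1.'' Taken literally, this is a dominance claim between Rounding~2 run with matrix $\bx$ and Rounding~2 (and then 1) run with matrix $(z_{ijk_i})$, and that claim is delicate: the rank-1 CRS winner is not monotone in the entries of the probability matrix (shrinking some $x_{i'j}$ can raise another agent's conditional win probability for item $j$), so one cannot simply couple the two roundings item by item. Your device of introducing independent thinnings $A_{ijk}\sim\mathrm{Bern}(z_{ijk}/x_{ij})$ and applying the matroid CRS only in the analysis (to $T_{ik}=\{j:X_{ij}A_{ijk}=1\}$), while keeping the actual $Y_{ij}$ from Rounding~2, sidesteps this entirely: the witness $\{j:Z_{ijk}Y_{ij}=1\}\subseteq J_i$ is manifestly independent in $\cM_{ik}$, the conditioning event $\cX'=\bigcap_i\{X_{ij_i}A_{ij_ik_i}=1\}$ is a genuine product event of probability $\prod_i z_{ij_ik_i}$, and all the CRS conditional-survival functions remain non-increasing in the remaining $X$'s and $A$'s (note $T_{ik_i}$ is non-decreasing in each $A_{ijk_i}$ and each $X_{ij}$), so the FKG step goes through exactly as in Lemma~\ref{lem:distinct-YZ}. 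In short: same decomposition, same three ingredients (CRS, FKG, Gurvits), but your explicit coupling makes rigorous a comparison that the paper leaves somewhat informal, and this is arguably the cleaner way to present the argument.
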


\begin{proof}
Denote by $J_i$ the set allocated to player $i$. Fix $k_1, \ldots, k_n$ such that $1\le k_i\le s_i$, consider the expected value of $\E\left[\prod_{i=1}^n v_{ik_i}(J_i)\right]$ after the rounding. Since $x_{ij} \ge z_{ijk}$, we know that the outcome of assigning each item with probability $x_{ij}$ is at least the outcome of assigning each item with probability $z_{ijk_i}$. Moreover, by Lemma~\ref{lem:independent-rounding}, we know that the outcome can be further lower-bounded by the outcome of Rounding Procedure 1, which is at least 
$$(1-1/e)^{2n}\sum_{\substack{j_1,\ldots,j_n \\\mbox{ \tiny distinct}}} \prod_{i=1}^{n} w_{i j_i k_i} z_{i j_i k_i}.$$
  by Theorem~\ref{thm:matroid-round}. Putting all the possible choices of $k_1,\ldots,k_n$ together, we have 
  
  \begin{align*}
       &\E\left[\prod_{i=1}^n\sum_{k_i=1}^{s_i}v_{ik_i}(J_i)\right] \\ =& \sum_{\substack{k_1, \ldots, k_n \\ 1\le k_i\le s_i}}\E\left[\prod_{i=1}^n v_{ik_i}(J_i)\right]  \\
       \ge& \sum_{\substack{k_1, \ldots, k_n \\ 1\le k_i\le s_i}}(1-1/e)^{2n}\sum_{\substack{j_1,\ldots,j_n \\\mbox{ \tiny distinct}}} \prod_{i=1}^{n} w_{i j_i k_i} z_{i j_i k_i} \\
        =& (1-1/e)^{2n}\sum_{\substack{j_1,\ldots,j_n \\\mbox{ \tiny distinct}}} \prod_{i=1}^{n} \sum_{k=1}^{s_i}w_{i j_i k} z_{i j_i k} \\
        \ge& e^{-n}(1-1/e)^{2n}\inf_{\substack{\by \in \RR_+^m: y^S\ge1, \\ \forall S\in\binom{[m]}{n}}} \prod_{i=1}^n\left(\sum_{k=1}^{s_i}\sum_{j=1}^{m} w_{ijk} z_{ijk} y_j \right)
  \end{align*}
 where the last inequality is due to Lemma~\ref{lem:Gurvits}.
 Finally, due to Lemma~\ref{lem:relaxation-sum}, the last quantity is at least $\frac{1}{e} (1-\frac{1}{e})^2 OPT$.
\end{proof}

\section{Bipartite matching with a matroid constraint}

In this section we discuss the case of valuation functions defined by a bipartite matching problem with a matroid constraint. More specifically, consider $n$ bipartite graphs $G_i = (L_i, R_i, E_i)$, $1 \le i\le n$, $L_i = [m]$, with a non-negative weight $w_{i,j,k}$ on each edge $(j, k)\in E_i$, and a matroid $\cM_i = (R_i, \cI_i)$ defined on on the right-hand side $R_i$ of each graph. Denote by $v_i$ the valuation function for player $i$. For any $S\subseteq L_i = [m]$, we define $v_i(S)$ to be the maximum weight of a matching $M_i$ such that vertices matched on the left are a subset of $S$, and the vertices matched on the right are an independent set in $\cM_i$.

Denote by $P(\cM_i)$ the associated matroid polytope. Let $\bz \in \mathbb{R}^{E_1} \times \ldots \times \mathbb{R}^{E_n}$ be a vector indexed by edges of the bipartite graphs $G_i$ where $z_{i, j, k}$ represents whether edge $(j, k) \in E_i$ is in the matching $M_i$ or not. Consider the following mathematical program, 

\begin{align*}
&\max_{\bz \in \mathbb{R}^{E_1} \times \ldots \times \mathbb{R}^{E_n}}\inf_{\substack{\by>0:y^S\ge1\\ \forall S\in\binom{[m]}{n}}}\prod_{i=1}^n&\left(\sum_{(j, k)\in E_i}w_{i,j,k}z_{i, j, k}y_j\right) \\
& s.t. ~~~ \sum_{i,k: (j, k)\in E_i} z_{i, j, k}\le1 & \forall j\in [m] \\
& ~~~~~~~ u_{i,k} = \sum_{j: (j,k)\in E_i}z_{i, j, k}\le1 & \forall 1\le i\le n, k\in R_i \\
& ~~~~~~~~~~~ {\mathbf u}_{i} \in P(\cM_i) & \forall 1\le i\le n \\
& ~~~~~~~~ 0\le z_{i, j, k}\le 1 & \forall 1\le i\le n, (j,k)\in E_i\\
\end{align*}

First we show that the program above is indeed a relaxation of Nash Social Welfare maximization.

\begin{lemma}
\label{lem:relaxation-matching}
The optimal solution of the above program is at least the optimal solution of the Nash Social Welfare maximization problem.
\end{lemma}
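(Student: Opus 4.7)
The plan is to follow the same template as Lemma~\ref{lem:relaxation-sum}: start from an optimal integer allocation $S_1,\ldots,S_n$ of NSW, lift it to a feasible point $\bz$ of the relaxation, and then lower-bound the relaxation's objective at that lifted point by $\prod_{i=1}^n v_i(S_i)$.

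For the lifting step, for each player $i$ I would pick a maximum weight matching $M_i \subseteq E_i$ whose left-endpoints lie in $S_i$ and whose right-endpoints form an independent set in $\cM_i$; by the definition of $v_i$ such a matching exists and achieves weight $v_i(S_i)$. Define $z_{i,j,k} = \mathbf{1}[(j,k) \in M_i]$ and $u_{i,k} = \sum_{j : (j,k) \in E_i} z_{i,j,k}$. I then check the four families of constraints in turn: the box constraints are trivial; $u_{i,k} \le 1$ holds because $M_i$ is a matching; $\mathbf{u}_i \in P(\cM_i)$ holds because $\mathbf{u}_i$ is the indicator of an independent set in $\cM_i$, hence a vertex of the matroid polytope; and $\sum_{i, k : (j,k) \in E_i} z_{i,j,k} \le 1$ holds because the $S_i$'s are disjoint, so item $j$ lies in at most one $S_i$, and inside that one matching $M_i$ it is matched to at most one partner.

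For the objective bound, I fix any admissible $\by$, i.e.\ $\by \ge 0$ with $y^S \ge 1$ for every $S \in \binom{[m]}{n}$, and expand
\[
\prod_{i=1}^n \Bigl( \sum_{(j,k) \in E_i} w_{i,j,k} z_{i,j,k} y_j \Bigr)
= \sum_{\substack{(j_i,k_i) \in M_i \\ 1 \le i \le n}} \Bigl(\prod_{i=1}^n y_{j_i}\Bigr) \Bigl(\prod_{i=1}^n w_{i,j_i,k_i}\Bigr).
\]
Every surviving term has $j_i \in S_i$, and disjointness of the $S_i$'s forces $j_1,\ldots,j_n$ to be pairwise distinct. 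Hence $\prod_i y_{j_i} = y^{\{j_1,\ldots,j_n\}} \ge 1$ applies termwise, and the expansion is at least $\prod_{i=1}^n \bigl(\sum_{(j,k) \in M_i} w_{i,j,k}\bigr) = \prod_{i=1}^n v_i(S_i)$. Taking the infimum over admissible $\by$ preserves this bound, so the value of $\bz$ in the relaxation is at least $\prod_i v_i(S_i)$, which is the integer optimum.

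The only non-routine piece is the ``all edge-tuples reduce to tuples with distinct $j_i$'s'' step, which is the direct analogue of the corresponding reduction in Lemma~\ref{lem:relaxation-sum} and in the matroid rank function case; here it works verbatim from disjointness of the $S_i$. I do not anticipate a real obstacle: the mild novelty is only that the per-player witness is now a matching constrained by a matroid rather than just a set, and this is precisely what the $u_{i,k} \le 1$ and $\mathbf{u}_i \in P(\cM_i)$ constraints were written to encode.
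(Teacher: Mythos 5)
Your proposal is correct and follows exactly the same route as the paper: lift the optimal integer allocation to a feasible $\bz$ via a per-agent maximum-weight matching, verify the constraints, then expand the product, restrict to tuples with distinct $j_i$'s (justified by disjointness of the $S_i$'s), and apply $y^{\{j_1,\ldots,j_n\}}\ge 1$ termwise. Your version is a bit more careful in spelling out the constraint checks and in noting that the only surviving terms are those with $(j_i,k_i)\in M_i$ — the paper just asserts feasibility is ``easy to verify'' and writes the expansion over all of $E_i$ — but there is no substantive difference.
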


\begin{proof}
Suppose that $\bx \in \{0,1\}^{n \times m}$ is the optimal allocation. Assume one of the maximal matchings in graph $G_i$ subject to the matroid constraint $\cM_i$, corresponding to this assignment, is $\bz_{i} \in \{0,1\}^{E_i}$.
Then it is easy to verify that $\bz_i$ satisfies all the constraints in the above program. Also, for every $\by \in \RR_+^m$ such that $y^S \geq 1$  for all $S \in {[m] \choose n}$, we have
\begin{align*}
    &\prod_{i=1}^{n} \left( \sum_{(j, k)\in E_i} w_{i,j,k} x_{i,j,k} y_j \right) \\ = & 
 \sum_{\substack{(j_i, k_i)\in E_i\\ 1\le i\le n}} \prod_{i=1}^{n} (w_{i,j_i,k_i} x_{i,j_i,k_i} y_{j_i}) \\
 = &\sum_{\substack{\mbox{\tiny distinct } j_1,\ldots,j_n \\ (j_i, k_i)\in E_i}} y^{\{j_1,\ldots,j_n\}} \prod_{i=1}^{n} (w_{i,j_i,k_i} x_{i,j_i,k_i})  \\
\geq &\sum_{\substack{\mbox{\tiny distinct } j_1,\ldots,j_n \\ (j_i, k_i)\in E_i}} \prod_{i=1}^{n} (w_{i,j_i,k_i} x_{i,j_i,k_i}) \\
= & \prod_{i=1}^{n} \left( \sum_{(j, k)\in E_i} w_{ijk} x_{ijk}\right)
\end{align*}   
where the inequality holds because of the constraint $y^S \geq 1$, and the equality between summations over all $n-$tuples and distinct $n$-tuples holds because we cannot have $z_{i,j,k} = z_{i', j',k'} = 1$ for $i \neq i'$ and $j = j'$ (we assign an item to only one player).
Therefore, the optimum of the above program is at least the integer optimum.
\end{proof}

Similarly, by Lemma~\ref{lem:concave-convex} and standard techniques like \cite{AGSS17} using separation oracles for matroid polytopes we can solve this program efficiently.

Next we are going to use contention resolution to design a rounding procedure. Notice that the constraints for ${\bf z}$  correspond to the polytope of the intersection of two matroids $P_{\cM^{(L)}\cap \cM^{(R)}}$: one partition matroid $\mathcal{M}^{(L)}$ encoding that at most $1$ edge can be chosen among all the edges incident to the same item $j$ (across all the graphs $G_i$); and another (general) matroid $\mathcal{M}^{(R)}$ such that among the edges in each $G_i$, at most one edge incident to each vertex on the right can be chosen, and the matched vertices form an independent set in $\mathcal{M}_i$. ($\mathcal{M}^{(R)}$ can be constructed by taking a disjoint union of the matroids $\mathcal{M}_i$, and adding parallel copies of elements representing the different edges incident to a given vertex on the right.) 

We can further define the allocation constraint on item $j$ as $\cM^{(L)}_j$ and the matroid constraint on the right-hand side of graph $G_i$ as  $\cM^{(R)}_i$. Our rounding algorithm is as follows. 

\

\paragraph{Rounding Procedure 3 \label{rounding:CR-matroid}: Rounding for bipartite matching with matroid constraints valuations.} Given a fractional solution $\bz$,

\begin{itemize}
\item For each edge $(j, k)\in E_i$, $1\le i\le n$ independently, set $X_{ijk} = 1$ with probability $z_{i,j,k}$ and $X_{i,j,k} = 0$ otherwise.
\item For each item $j \in[m]$ independently, run a $(1, 1-e^{-1})$-balanced contention resolution algorithm on the set $P_{j} = \{(i',j',k'): j' = j, X_{i',j',k'} = 1\}$ and with the matroid $\mathcal{M}^{(L)}_j$. Denote by $Y_{i,j,k}$ the indicator random variable of the event that edge $(i, j, k)$ is selected in this contention resolution.
\item For each agent $i\in[n]$ independently, run a $(1, 1-e^{-1})$-balanced contention resolution algorithm on the set $Q_i = \{(i',j',k'): i'=i, X_{i',j',k'} = 1\}$ and with the matroid $\mathcal{M}^{(R)}_i$. Denote by $Z_{i,j,k}$ the indicator random variables of the event that edge $(i, j, k)$ is selected in this contention resolution.
\item Allocate item $j$ to agent $i$, if $Y_{i,j,k} Z_{i,j,k} = 1$ for some $k \in R_i$.
\end{itemize}

We can prove that this rounding algorithm is a good approximation using a similar strategy as Theorem~\ref{thm:matroid-round}.

\begin{theorem}
The outcome of the rounding above has expected value at least $\frac1e(1-\frac1e)^2OPT$.
\end{theorem}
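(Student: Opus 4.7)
The plan is to adapt the proof of Theorem~\ref{thm:matroid-round} to the three-index edge variables $X_{i,j,k}, Y_{i,j,k}, Z_{i,j,k}$. First, I would verify that for each agent $i$ the random set of surviving edges $M_i = \{(j,k)\in E_i : Y_{i,j,k}Z_{i,j,k}=1\}$ is a feasible matching whose right-endpoints are independent in $\cM_i$: contention resolution on $\cM^{(L)}_j$ guarantees that for every item $j$ at most one incident edge survives across all graphs, and contention resolution on $\cM^{(R)}_i$ (together with the parallel-copy construction) forces the right-side of $M_i$ to be a proper matching whose vertices are independent in $\cM_i$. Consequently
\[
v_i(J_i) \ \geq\ \sum_{(j,k)\in E_i} w_{i,j,k}\,Y_{i,j,k}Z_{i,j,k}.
\]
Expanding the product over $i$, taking expectation, and dropping all tuples in which some $j_i$ is repeated yields
\[
\E\!\left[\prod_{i=1}^n v_i(J_i)\right] \geq \!\!\!\sum_{\substack{\text{distinct}\\ j_1,\dots,j_n}}\sum_{\substack{(j_i,k_i)\in E_i\\ 1\le i\le n}}\prod_i w_{i,j_i,k_i}\,\E\!\left[\prod_i Y_{i,j_i,k_i}Z_{i,j_i,k_i}\right].
\]

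The central step is the analogue of Lemma~\ref{lem:distinct-YZ}: for any distinct $j_1,\dots,j_n$ and any $(j_i,k_i)\in E_i$,
\[
\E\!\left[\prod_i Y_{i,j_i,k_i}Z_{i,j_i,k_i}\right] \ \geq\ (1-1/e)^{2n}\prod_i z_{i,j_i,k_i}.
\]
I would prove this by conditioning on the event $\cX = \{X_{i,j_i,k_i}=1 \text{ for every } i\}$, which has probability $\prod_i z_{i,j_i,k_i}$, and rewriting the conditional expectation as $\E\big[\prod_i \pi_i(P_{j_i})\sigma_i(Q_i)\mid\cX\big]$, where $\pi_i$ and $\sigma_i$ are the monotone conditional survival probabilities from the two contention-resolution schemes. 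Monotonicity (Theorem~\ref{thm:CRS-matroid}) makes both $F=\prod_i \pi_i(P_{j_i})$ and $G=\prod_i \sigma_i(Q_i)$ non-increasing in the remaining free $X$-variables, so the FKG inequality gives $\E[FG\mid\cX]\geq \E[F\mid\cX]\,\E[G\mid\cX]$. Because the items $j_i$ are distinct, the factors $\pi_i(P_{j_i})$ depend on pairwise disjoint sets of $X$-variables and are mutually independent, so $\E[F\mid\cX]\geq (1-1/e)^n$ by the $(1,1-1/e)$-balanced property; since distinct agents own disjoint edge sets, the factors $\sigma_i(Q_i)$ are also mutually independent, giving $\E[G\mid\cX]\geq (1-1/e)^n$.

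With the key lemma in hand, the remaining sum $\sum_{\text{distinct } j_1,\dots,j_n}\prod_i \big(\sum_{k:(j_i,k)\in E_i} w_{i,j_i,k}z_{i,j_i,k}\big)$ is exactly $\sum_{S\in\binom{[m]}{n}} c_S$, the sum of coefficients of the multilinear degree-$n$ monomials $y^S$ in $p(\by)=\prod_i\big(\sum_{(j,k)\in E_i} w_{i,j,k}z_{i,j,k}\,y_j\big)$. Each factor is a nonnegative linear form in $\by$, hence $p$ is real-stable, and Lemma~\ref{lem:Gurvits} gives $\sum_S c_S \geq e^{-n}\inf_{\by>0,\,y^S\geq 1} p(\by)$, which by Lemma~\ref{lem:relaxation-matching} is at least $OPT^n$. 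Putting the bounds together yields $\E[\prod_i v_i(J_i)]\geq \big(\tfrac1e(1-\tfrac1e)^2\big)^n OPT^n$, so some outcome realizes geometric mean at least $\tfrac1e(1-\tfrac1e)^2\,OPT$.

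The main obstacle I anticipate is the bookkeeping in the key lemma: with three-index variables and two different matroids, one has to carefully identify the $X$-variables that each $\pi_i$ and each $\sigma_i$ depends on, and verify that distinctness of the $j_i$'s is enough to render the $\pi_i$'s jointly independent and (separately) the $\sigma_i$'s jointly independent, even though $\pi$- and $\sigma$-factors genuinely share variables with one another. Once this structural check is set up, FKG and Gurvits' lemma close the argument exactly as in Section~\ref{sec:matroid}.
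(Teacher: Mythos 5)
Your proof is correct and follows essentially the same route as the paper's: set up the key lemma $\E[\prod_i Y_{i,j_i,k_i}Z_{i,j_i,k_i}] \geq (1-1/e)^{2n}\prod_i z_{i,j_i,k_i}$ via conditioning on $\cX$ and FKG, then sum over distinct tuples and finish with Gurvits' lemma. The one small refinement you add is to observe that, given $\cX$, the $\pi_i(P_{j_i})$'s are mutually independent (distinct $j_i$'s give disjoint $X$-variable supports) and likewise the $\sigma_i(Q_i)$'s, so you only need one FKG step to separate the $\pi$-block from the $\sigma$-block rather than the paper's ``repeated FKG''; this is a valid shortcut, though not a different argument in substance.
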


\begin{proof}

First we will prove a similar inequality as Lemma~\ref{lem:distinct-YZ}: consider $n$ distinct items $j_1, \ldots, j_n$ where $j_i$ is allocated to player $i$. That means there must be some edge $(j_i, k_i)$ chosen in the rounding. We will show the following inequality:  

$$\mathbb{E}\left[\prod_{i=1}^n Y_{i, j_i, k_i}Z_{i, j_i, k_i}\right]\ge (1-\frac1e)^{2n}\prod_{i=1}^nz_{i,j_i,k_i}$$

As we have discussed in the rounding procedure, for any edge $(j, k)\in E_i$, $Y_{i, j, k}$ is a function of variables in $P_{j}$, and $Z_{i, j, k}$  is a function of variables in $Q_i$. Furthermore, by the monotonicity of the contention resolution scheme, $$\mathbb{E}[Y_{i, j, k} \mid X_{i',j,k'}: (i',j,k') \in P_{j} ]$$ and $$\mathbb{E}[Z_{i, j, k} \mid X_{i,j',k'}: (i, j',k') \in Q_i ]$$ are both functions that are non-increasing in the variables $X_{i',j',k'}$, except $X_{i,j,k}$ itself which must be $1$, otherwise $Y_{i,j,k} = Z_{i,j,k} = 0$. 

Let $\cX$ denote the event that $\forall i\in [n]$, $X_{i, j_i, k_i} = 1$. As in the previous section, we have a product probability space conditioned on $\cal X$, and non-increasing functions on this product space. By repeated applications of the FKG inequality, we have 

\begin{align*}
\mathbb{E}\left[\prod_{i=1}^n Y_{i, j_i, k_i}Z_{i, j_i, k_i}|\cX\right] &\ge \prod_{i=1}^n \mathbb{E}\left[Y_{i, j_i, k_i}|\cX\right]\mathbb{E}\left[Z_{i, j_i, k_i}|\cX\right].
\end{align*}

By the properties of contention resolution scheme we use, for any $1\le i\le n$,  $$\mathbb{E}\left[Y_{i, j_i, k_i}|\cX\right] = \mathbb{E}\left[Y_{i, j_i, k_i}|X_{i, j_i, k_i}\right] \ge 1-1/e,$$ $$\mathbb{E}\left[Z_{i, j_i, k_i}|\cX\right] = \mathbb{E}\left[Z_{i, j_i, k_i}|X_{i, j_i, k_i}\right] \ge 1-1/e.$$ 
Putting this together we have 

\begin{align*}
    &\mathbb{E}\left[\prod_{i=1}^n Y_{i, j_i, k_i}Z_{i, j_i, k_i}\right] \\ = & ~  \mathbb{E}\left[\prod_{i=1}^n Y_{i, j_i, k_i}Z_{i, j_i, k_i}|\cX\right]\Pr[\cX] \\ \ge & ~ (1-\frac1e)^{2n}\Pr[\cX] = (1-\frac1e)^{2n}\prod_{i=1}^nz_{i,j_i,k_i}.
\end{align*}

Now we can bound the expect product of valuation over all agents, after the rounding, which is 
\begin{align*}
    &\sum_{\substack{\mbox{\tiny distinct } j_1, \ldots, j_n \\  (j_i, k_i)\in E_i}} \mathbb{E}\left[\prod_{i=1}^n Y_{i, j_i, k_i}Z_{i, j_i, k_i}\right] \\ \ge & ~(1-\frac1e)^{2n}\sum_{\substack{\mbox{\tiny distinct } j_1, \ldots, j_n \\  (j_i, k_i)\in E_i}}\prod_{i=1}^nz_{i, j, k}.
\end{align*} 

By Lemma~\ref{lem:Gurvits} (Generalized Gurvits' Lemma), we know that is at least 
$$(1-\frac1e)^{2n}e^{-n}\inf_{\by>0:y^S\ge1,\forall S\in\binom{[m]}{n}}\prod_{i=1}^n\left(\sum_{(j, k)\in E_i}w_{i.j,k}z_{i, j, k}y_j\right)$$ which is a $(1-1/e)^{2n}e^{-n}$ approximation for the objective function. 
\end{proof}

Moreover, similar to the case of weighted matroid rank functions, we can prove that skipping the third bullet point in the rounding will not decrease the outcome and the procedure will not depend on the valuation functions any more. So similarly, we can still use the new rounding method for sums of bipartite matching valuations with a matroid constraint. The proof is quite similar to the matroid case and we omit the proof here.

\bibliographystyle{plain}
\bibliography{main} 

\appendix

\section{High Dimensional matching with matroid constraints} \label{appen:high-dimension}

In this section we consider the most general class of valuation functions we can handle -- high dimensional matching with matroid constraints.

Let us represent the valuation function of each player by a weighted k-partite hypergraph and $k-1$ matroids. For player $i$, consider a hypergraph $G_i = (V_i, E_i)$ where $V_i = S_i \cup T_{i,1} \cup \ldots \cup T_{i, k-1}$, $S_i = [m]$ is the set of items, and $T_{i, 1}, \ldots, T_{i,k-1}$ are disjoint sets; each hyperedge in $E_i$ contains exactly 1 vertex from each of $S_i, T_{i,1}, \ldots, T_{i,k-1}$. We have a non-negative weight $w_{e}^{(i)}$ on the hyperedge $e \in E_i$, and a matroid constraint $\cM_{i, \ell} = (T_{i, \ell}, \cI_{i, \ell})$ defined on each $T_{i, \ell}$ for $1 \le \ell \le k$. Denote by $v_i$ the valuation function for agent $i$. For any $S \subseteq S_i = [m]$, we define $v_i(S)$ to be the maximum weight of a matching of hyperedges in $G_i$ such that vertices matched in $S_i$ are a subset of $S$, and the vertices matched in $T_{i, \ell}$ are an independent set in $\cM_{i, \ell}$ for every $1 \le \ell \le k-1$.

Denote by $P(\cM_{i, \ell})$ the matroid polytope for matroid $\cM_{i, \ell}$. Let $\bz \in \mathbb{R}^{E_1} \times ... \times \mathbb{R}^{E_n}$ be an allocation vector where $z_{e}^{(i)} = 1$ indicates that edge $e\in E_i$ is in the matching representing the value of $v_i(S)$. Consider the following mathematical program:

\begin{align*}
&\max_{\bz \in \mathbb{R}^{E_1} \times...\times \mathbb{R}^{E_n}}  \inf_{\substack{\by>0:y^S\ge1\\ \forall S\in\binom{[m]}{n}}}\prod_{i=1}^n\left(\sum_{\substack{e \in E_i \\ e_i \text{ incident on } j \text{ in } S_i}}w_{e}^{(i)}z_{e}^{(i)}y_{j}\right) \\
& s.t.  \sum_{1\le i\le n, e\in E_i}z_{e}^{(i)}\le1, ~~~~~~~~~~~~~~~~~~~~~~~~~~~~~~~\forall j\in [m] \\
& ~~~~~ u^{i,\ell}_v = \sum_{e \in E_i: v \in e} z_{e}^{(i)}\le1, \forall 1\le i\le n, 1\le \ell \le k, v\in T_{i,\ell} \\
& ~~~~~ {\mathbf u}^{i,\ell} \in P(\cM_{i, \ell}),~~~ \forall 1\le i\le n, 1\le \ell \le k \\
& ~~~~~ 0\le z_{e}^{(i)}\le 1, ~~~~~~~~~~~\forall 1\le i\le n, e\in E_i\\
\end{align*}

First we prove that the program above is indeed a relaxation of Nash social welfare maximization.

\begin{lemma}
\label{lem:relaxation-matching}
The optimal solution of the above program is at least the optimal solution of the Nash Social Welfare problem.
\end{lemma}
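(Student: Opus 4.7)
The plan is to mimic the proofs of the analogous relaxation lemmas for matroid rank functions (Lemma 4) and bipartite matching with a matroid constraint (Lemma 6), adapting the combinatorial setup from edges in a bipartite graph to hyperedges in a $k$-partite hypergraph. The essential structure of the argument does not change: starting from an optimal integer allocation, construct a feasible $\bz$ for the relaxation whose objective value (for every admissible $\by$) dominates the integer value.

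First I would take an optimal integer allocation $\bx \in \{0,1\}^{n\times m}$ of Nash Social Welfare, and for each player $i$ let $M_i \subseteq E_i$ be a maximum-weight matching in $G_i$ whose $S_i$-endpoints lie in $\bx_i$ and whose $T_{i,\ell}$-endpoints form an independent set in $\cM_{i,\ell}$ for every $1 \le \ell \le k-1$. Set $z_e^{(i)} = \mathbf{1}[e \in M_i]$; then $v_i(\bx_i) = \sum_{e \in E_i} w_e^{(i)} z_e^{(i)}$. I would then verify feasibility: the per-item constraint $\sum_{i,e} z_e^{(i)} \le 1$ holds because item $j \in [m]$ is allocated to at most one player and matched at most once within that player's matching; the constraints $u^{i,\ell}_v \le 1$ and $\mathbf{u}^{i,\ell} \in P(\cM_{i,\ell})$ follow because $M_i$ is a matching whose $T_{i,\ell}$-side vertices are independent in $\cM_{i,\ell}$.

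Next, for an arbitrary $\by \in \RR_+^m$ with $y^S \ge 1$ for every $S \in \binom{[m]}{n}$, I would expand
\begin{align*}
& \prod_{i=1}^n \left( \sum_{\substack{e \in E_i \\ e \text{ incident on } j \in S_i}} w_e^{(i)} z_e^{(i)} y_j \right) \\
&= \sum_{(e_1,\ldots,e_n)} \prod_{i=1}^n w_{e_i}^{(i)} z_{e_i}^{(i)} y_{j_i(e_i)},
\end{align*}
where $j_i(e_i)$ denotes the $S_i$-endpoint of $e_i$. Since the allocation $\bx$ assigns each item to at most one player, any tuple with $z_{e_i}^{(i)} z_{e_{i'}}^{(i')} = 1$ for $i \ne i'$ must satisfy $j_i(e_i) \ne j_{i'}(e_{i'})$, so only tuples with distinct $j_1,\ldots,j_n$ contribute. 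For each such tuple the factor $y^{\{j_1,\ldots,j_n\}} \ge 1$ by hypothesis, and dropping the $y$-factors gives the lower bound $\prod_i \sum_e w_e^{(i)} z_e^{(i)} = \prod_i v_i(\bx_i)$. Taking the infimum over admissible $\by$ then yields that the relaxation value is at least the integer optimum.

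I do not anticipate any genuine obstacle: the only point that requires attention is bookkeeping with the hypergraph index, in particular confirming that the factorization of the product over tuples collapses to distinct $S_i$-endpoints for exactly the same reason as in the bipartite case (items are uniquely allocated). The matroid constraints on the $T_{i,\ell}$-side sets $\mathbf{u}^{i,\ell}$ play no role in the inequality itself; they only enter in showing that the chosen $M_i$ realizes $v_i(\bx_i)$, which is by definition of $v_i$.
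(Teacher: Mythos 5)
Your proof is correct and follows essentially the same route as the paper's: take the optimal integer allocation, set $\bz$ to the indicator of each player's maximum-weight constrained matching, expand the product over hyperedge tuples, collapse to distinct $S_i$-endpoints because items are uniquely assigned, and drop the $y$-factors using $y^S \ge 1$. You are only slightly more explicit than the paper in spelling out feasibility, which the paper dismisses as "easy to verify."
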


\begin{proof}
Suppose that $\bx \in \{0,1\}^{n \times m}$ is the optimal allocation. Let $\bz_{i} \in \{0,1\}^{|E_i|}$ denote a maximum-weight matching in $G_i$ attaining the value of agent $i$ and satisfying the respective matroid constraints.
It is easy to verify that $\bz_i$ satisfies all the constraints in the above program. Also, for every $\by \in \RR_+^m$ such that $y^S \geq 1$  for all $S \in {[m] \choose n}$, we have
\begin{align*} 
&\prod_{i=1}^n\left(\sum_{j \in S_i} \sum_{\substack{e \in E_i: e \cap S_i = \{j\}}}w_{e}^{(i)}z_{e}^{(i)}y_{j}\right)\\
= & \sum_{j_1 \in S_1, \ldots, j_n \in S_n} \sum_{\substack{e_i \in E_i: e_i \cap S_i = \{j_i\}}} \prod_{i=1}^{n} \left(w_{e_i}^{(i)}z_{e_i}^{(i)}y_{j_i}\right) \\
= & \sum_{\substack{\mbox{\tiny distinct } j_1, \ldots, j_n \\  e_i\in E_i: e_i \cap S_i = \{j_i\}}} y^{\{j_1,\ldots,j_n\}} \prod_{i=1}^{n} \left(w_{e_i}^{(i)}z_{e_i}^{(i)}\right) \\
\ge & \sum_{\substack{\mbox{\tiny distinct } j_1, \ldots, j_n \\  e_i\in E_i: e_i \cap S_i = \{j_i\}}} \prod_{i=1}^{n} \left(w_{e_i}^{(i)}z_{e_i}^{(i)}\right)
= \prod_{i=1}^n\left(\sum_{e \in E_i}w_{e}^{(i)}z_{e}^{(i)}\right)  
\end{align*}

where the inequality holds because of the constraint $y^S \geq 1$, and the equality between summations over all $n-$tuples and distinct $n$-tuples holds because we cannot have $z_{e_i}^{(i)} = z_{e_{i'}}^{(i')} = 1$ for $i \neq i'$, $e_i$ and $e_{i'}$ incident on the same item $j$.
Therefore, the optimum of the above program is at least the integer optimum.
\end{proof}

Similarly, by Lemma~\ref{lem:concave-convex} and standard convex optimization techniques, using separation oracles for matroid polytopes, we can solve this program efficiently.

Next we are going to use contention resolution scheme to design a rounding procedure. Notice that the constraints for variable ${\bf z}$ actually corresponds to the intersection of $k$ matroid polytopes $\cap_{i=0}^{k-1} P(\cM^{(i)})$: one partition matroid $\mathcal{M}^{(0)}$ for the assignment constraint (vertex sets $V_{1,0}, \ldots, V_{n,0}$) saying that each item is allocated to at most one agent and that an agent can allocate at most one edge to this item, a general matroid constraint $\mathcal{M}^{(\ell)}$ on $V_{1,\ell}, \ldots, V_{n,\ell}$ for any $1\le \ell \le k-1$ saying that each vertex is incident to at most one matching hyperedge and for each agent $i$, the matched subset of each $T_{i,\ell}$ is independent in $\cM_{i,\ell}$.

Notice that each of those matroids is a union of $n$ matroids, one for each agent. The partition matroid $\cM^{(0)}$ can be split it up as $\cM^{(0)} = \bigcup_{j=1}^{m} \cM^{(0)}_{j}$ where $\cM^{(0)}_{j}$ expresses the allocation constraint for item $j$. For the other matroids $\cM^{(\ell)}$ , we can write $\cM^{(\ell)} = \cup_{i=1}^{n} \cM^{(\ell)}_i$ where $\cM^{(\ell)}_i$ expresses the $\ell$-th matroid constraint for agent $i$. Our rounding algorithm works as follows. 

\

\paragraph{Rounding Procedure 4 \label{rounding:CR-matroid}: Rounding for k-partite matching with matroid constraints.} Given a fractional solution $\bz$,

\begin{itemize}
\item For every agent $1\le i\le n$ and each edge $e \in E_i$,  independently set $X_{e}^{(i)} = 1$ with probability $z_{e}^{(i)}$ and $X_{e}^{(i)} = 0$ otherwise.
\item For each item $j\in[m]$, independently run a $(1, 1-1/e)$ contention resolution algorithm on the set $P_j = \{(e, i): e\in E_i, 1 \leq i \leq n, X_{e}^{(i)} = 1, j \in e \}$ with the matroid constraint $\mathcal{M}^{(0)}_j$. Denote by $Y_{e}^{(i)}$ the indicator random variable for the event that edge $e \in E_i$ is selected in this procedure.
\item For each agent $i \in [n]$ and $1 \le \ell \le k-1$, independently run a $(b, \frac{1}{b}(1-e^{-b}))$ contention resolution algorithm for $b = \frac{1}{k-1}$ on the set $Q_i = \{e\in E_i: X_{e}^{(i)} = 1\}$ with the matroid constraint $\mathcal{M}^{(i)}_\ell$. Denote by $Z_{e, \ell}^{(i)} = 1$ the indicator random variable for the event that $e \in E_i$ is selected in this procedure.
\item We allocate item $j$ to agent $i$, if there is edge $e \in E_i$ incident to this item $j$ such that $Y_{e}^{(i)} Z_{e}^{(i)}  = 1$ where $Z_{e}^{(i)} = \prod_{\ell=1}^{k} Z_{e, \ell}^{(i)}$.
\end{itemize}

We can prove that this rounding algorithm is a good approximation using a similar strategy as Theorem~\ref{thm:matroid-round}. 

\begin{theorem}
The outcome of Rounding Procedure 4 has expected value at least $(e^2 k)^{-n} OPT$.
\end{theorem}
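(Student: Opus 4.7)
The plan is to follow the same three-step structure used for bipartite matching with a matroid constraint. First, I would establish the analog of Lemma~\ref{lem:distinct-YZ}: for any distinct items $j_1,\ldots,j_n \in [m]$ and any edges $e_i \in E_i$ with $j_i \in e_i$,
\[
\E\!\left[\prod_{i=1}^n Y_{e_i}^{(i)} Z_{e_i}^{(i)}\right] \geq \gamma^n \prod_{i=1}^n z_{e_i}^{(i)},
\]
where $Z_e^{(i)} = \prod_{\ell=1}^{k-1} Z_{e,\ell}^{(i)}$ and $\gamma = (1-1/e)\cdot b \cdot c_b^{k-1}$ with $b = 1/(k-1)$ and $c_b = (1-e^{-b})/b$. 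Summing this over distinct $n$-tuples $(j_1,\ldots,j_n)$ and all consistent edges $(e_1,\ldots,e_n)$ and invoking Gurvits' Lemma (Lemma~\ref{lem:Gurvits}) on the real-stable polynomial $\prod_i \bigl(\sum_{e\in E_i,\,j\in e\cap S_i} w_e^{(i)} z_e^{(i)} y_j\bigr)$ produces an extra $e^{-n}$, so by Lemma~\ref{lem:relaxation-matching} we obtain $\E[\prod_i v_i(J_i)] \geq (\gamma/e)^n\, OPT$. The theorem then reduces to the numerical claim $\gamma \geq 1/(ek)$.

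For the key inequality, I would condition on the event $\cX$ that $X_{e_i}^{(i)} = 1$ for every $i$, which by independence has probability $\prod_i z_{e_i}^{(i)}$ and leaves the remaining $X$-variables mutually independent. By the monotonicity guaranteed in Theorem~\ref{thm:CRS-matroid}, both $Y_{e_i}^{(i)}$ and each $Z_{e_i,\ell}^{(i)}$ are non-increasing functions on this conditional product space, so the FKG inequality yields
\[
\E\!\left[\prod_i Y_{e_i}^{(i)} Z_{e_i}^{(i)} \,\middle|\, \cX\right] \geq \prod_i \E[Y_{e_i}^{(i)} \mid \cX] \cdot \prod_i \E\!\left[\prod_{\ell=1}^{k-1} Z_{e_i,\ell}^{(i)} \,\middle|\, \cX\right].
\]
The item-side $(1, 1-1/e)$-CRS gives $\E[Y_{e_i}^{(i)} \mid \cX] \geq 1-1/e$. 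For the matroid-side, I would interpret the step-3 invocation of the $(b, c_b)$-balanced CRS on $Q_i$ as first downsampling $Q_i$ at rate $b$, once and shared across $\ell = 1,\ldots,k-1$, and then running a monotone $(1, c_b)$-CRS for each $\cM_i^{(\ell)}$ on the reduced set; conditioning on the shared downsample and applying FKG once more across $\ell$ then yields $\E[\prod_\ell Z_{e_i,\ell}^{(i)} \mid \cX] \geq b \cdot c_b^{k-1}$. Finally, checking $(1-1/e)\,b\,c_b^{k-1} \geq 1/(ek)$ reduces to $b c_b^{k-1} \geq 1/((e-1)k)$; using $c_b \geq 1 - b/2$, the standard monotonicity of $(1 - 1/(2m))^m$ in $m$, and the estimate $(e-1)e^{-1/2} > 1$, a short case analysis (small $k$ by direct computation, large $k$ via the limit $c_b^{k-1}\to e^{-1/2}$) covers all $k \geq 2$.

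The main obstacle is the joint treatment of the $k-1$ matroid CRSs per agent. Applying an independent $(b,c_b)$-CRS (with its own downsampling) for each $(i,\ell)$ and splitting the FKG product into all $(k-1)n$ indicators $Z_{e_i,\ell}^{(i)}$ individually would only yield a per-agent factor of $(bc_b)^{k-1} = (1-e^{-b})^{k-1}$, which for $b = 1/(k-1)$ decays as $(k-1)^{-(k-1)}$ and destroys the target $\Theta(1/k)$ rate. The correct bookkeeping is to keep $\prod_{\ell=1}^{k-1} Z_{e_i,\ell}^{(i)}$ packaged together in the FKG step and to share the downsampling across the $k-1$ schemes, so that exactly one factor of $b$ is paid while the surviving $c_b^{k-1}$ stays bounded below by $e^{-1/2}$; this is what recovers the $\Theta(1/k)$ scaling in the final bound.
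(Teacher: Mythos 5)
Your approach matches the paper's in outline: condition on $\cX$, apply FKG over the conditional product space, multiply by $\Pr[\cX]$, then invoke Gurvits' Lemma and the relaxation lemma. Both you and the paper rely on the same key bookkeeping trick of a shared $b$-downsample across the $k-1$ matroid CRSs per agent. However, there are two points worth flagging.

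First, you are right that the shared downsample is essential, and it is worth emphasizing that the paper's Rounding Procedure 4 does not actually make this unambiguous: the phrasing ``for each agent $i$ and $1\le\ell\le k-1$, independently run a $(b,c_b)$-balanced CRS on $Q_i$'' reads as if each $(i,\ell)$ pair performs its own downsampling to reach $bP(\cM_i^{(\ell)})$. As you observe, that would yield a per-agent factor of $(bc_b)^{k-1}=(1-e^{-b})^{k-1}$, which decays like $(k-1)^{-(k-1)}$ rather than $\Theta(1/k)$. The analysis in the paper ($\E[Z_{e_i}^{(i)}\mid\cX]\ge\frac{1}{k-1}c_b^{k-1}$) is only compatible with the shared-downsample interpretation, so your reading is the right one.

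Second, and more substantively, your insistence on proving the numerical claim $\gamma=(1-1/e)\,b\,c_b^{k-1}\ge 1/(ek)$ is actually necessary to salvage the stated constant, and this is where the paper's own chain of inequalities is slightly off. The paper bounds $\E[Z_{e_i}^{(i)}\mid\cX]\ge 1/(ek)$ and $\E[Y_{e_i}^{(i)}\mid\cX]\ge 1-1/e$ but then writes $\E[\prod Y_{e_i}^{(i)}Z_{e_i}^{(i)}\mid\cX]\ge(ek)^{-n}$, silently discarding the $(1-1/e)^n$ from the $Y$-side CRS. Chasing the paper's stated estimates literally gives $((1-1/e)/(e^2 k))^n \,OPT = ((e-1)/(e^3k))^n\,OPT$, which is \emph{weaker} than the claimed $(e^2k)^{-n}\,OPT$. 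To recover $(e^2k)^{-n}$ one must show exactly your $\gamma\ge 1/(ek)$, equivalently $b\,c_b^{k-1}\ge\frac{1}{(e-1)k}$. This holds, but not via the paper's crude estimate $c_b\ge\frac{k-1}{k}$ followed by $(1-1/k)^{k-1}\ge 1/e$ (that only works for $k=2$). Your route through $c_b\ge 1-b/2$ and $(1-\tfrac{1}{2m})^m\nearrow e^{-1/2}$, together with $(e-1)e^{-1/2}>1$ and a check of small $m$, does work: for $m=k-1\ge 4$ one has $(1-\tfrac{1}{2m})^m\ge(7/8)^4\approx 0.586>\frac{1}{e-1}\ge\frac{m}{(e-1)(m+1)}$, and $m=1,2,3$ are verified directly. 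A slightly cleaner alternative is to observe that $c_b^{k-1}=\bigl((k-1)(1-e^{-1/(k-1)})\bigr)^{k-1}$ is decreasing in $k$ with limit $e^{-1/2}$, hence $c_b^{k-1}\ge e^{-1/2}$ always, and then $\frac{1}{k-1}\,e^{-1/2}\ge\frac{1}{(e-1)k}$ reduces to $\frac{k}{k-1}\ge\frac{e^{1/2}}{e-1}\approx 0.960$, which is trivially true. Either way, your proposal correctly identifies and fills the gap; the theorem as stated is correct but requires this sharper numerical step.

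One small notational point: you correctly write $Z_e^{(i)}=\prod_{\ell=1}^{k-1}Z_{e,\ell}^{(i)}$; the paper's $\prod_{\ell=1}^k$ is a typo since there are only $k-1$ matroid constraints per agent.
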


\begin{proof}

First we will prove a similar inequality as Lemma~\ref{lem:distinct-YZ}: consider $n$ distinct items $j_1, \ldots, j_n$ where $j_i$ is allocated to player $i$. That means there must be some edge $e_i$ such that it is chosen in the rounding and $e_i$ is incident on $j_i$ in $S_i$. We will show the following inequality:  

$$\mathbb{E}\left[\prod_{i=1}^n Y_{e_i}^{(i)}Z_{e_i}^{(i)}\right]\ge (ek)^{-n}\prod_{i=1}^nz_{e_i}^{(i)}$$



As we have discussed in the rounding procedure, for any edge $e\in E_i$ incident to item $j$, $Y_{e}^{(i)}$ is a function of variables in $P_{j}$, and $Z_{i, j, k}$  is a function of variables in $Q_i$. Furthermore, by the monotonicity of the contention resolution scheme, $$\mathbb{E}[Y_{e}^{(i)} \mid X_{e'}^{(i')}: (e',i') \in P_{j} ]$$ and $$\mathbb{E}[Z_{e}^{(i)} \mid X_{e'}^{(i')}: e'\in Q_i ]$$ are both functions that are decreasing in the variables $X_{e'}^{(i')}$, except $X_{e}^{(i)}$ itself which must be $1$ otherwise $Y_{e}^{(i)} = Z_{e}^{(i)} = 0$. 

Let $\cX$ denote the event that $\forall i\in [n]$, $X_{e_i}^{(i)} = 1$. By the FKG inequality applied repeatedly, we get

\begin{align*}
\mathbb{E}\left[\prod_{i=1}^n Y_{e_i}^{(i)}Z_{e_i}^{(i)}|\cX\right] \ge \prod_{i=1}^n \mathbb{E}\left[Y_{e_i}^{(i)}|\cX\right]\mathbb{E}\left[Z_{e_i}^{(i)}|\cX\right].
\end{align*}

By the properties of the contention resolution schemes we use, for any $1\le i\le n$,  $$\mathbb{E}\left[Y_{e_i}^{(i)}|\cX\right] = \mathbb{E}\left[Y_{e_i}^{(i)}|X_{e}^{(e)}\right] \ge 1-1/e$$ and $$\mathbb{E}\left[Z_{e_i}^{(i)}|\cX\right]\ge \frac1{k-1}\left(\frac{1-\exp\{-\frac1{k-1}\}}{1/(k-1)}\right)^{k-1} $$ 
$$ \geq \frac{1}{k-1} \left( \frac{1 - 1 / (1 + \frac{1}{k-1})}{1 / (k-1) } \right)^{k-1} $$
$$ =  \frac{1}{k-1} \left( \frac{k-1}{k} \right)^{k-1} \geq \frac{1}{ek}.$$

Putting this together we have 

\begin{align*}\mathbb{E}\left[\prod_{i=1}^n Y_{e_i}^{(i)}Z_{e_i}^{(i)}\right] &= \mathbb{E}\left[\prod_{i=1}^n Y_{e_i}^{(i)}Z_{e_i}^{(i)}|\cX\right]\Pr[\cX] \\ &\ge (ek)^{-n}\Pr[\cX] \geq (e^2 k)^{-n}\prod_{i=1}^nz_{e_i}^{(i)}.
\end{align*}

Now we can bound the product of valuations over all agents, after rounding, which is 
\begin{align*}
&\sum_{\substack{\mbox{\tiny distinct } j_1, \ldots, j_n \\  e_i\in E_i: e_i \cap S_i = \{j_{i}\}}} \mathbb{E}\left[\prod_{i=1}^n Y_{e_i}^{(i)}Z_{e_i}^{(i)}\right] \\ \ge & ~ (ek)^{-n}\sum_{\substack{\mbox{\tiny distinct } j_1, \ldots, j_n \\  e_i\in E_i: e_i \cap S_i = \{j_{i}\}}}\prod_{i=1}^nz_{e_i}^{(i)}. 
\end{align*}

By Lemma~\ref{lem:Gurvits} (Generalized Gurvits' Lemma), we know that is at least 
$$(ek)^{-n}e^{-n}\inf_{\substack{\by>0:y^S\ge1, \\ \forall S\in\binom{[m]}{n}}}\prod_{i=1}^n\left(\sum_{\substack{e \in E_i \\ e_i \text{ incident on } j \text{ in } S_i}}w_{e}^{(i)}z_{e}^{(i)}y_{j}\right)$$ which is a $(e^2 k)^{-n}$ approximation for objective function of the program. 
\end{proof}

Finally, similar to the case of weighted matroid rank functions, we can show that skipping the third bullet point in the rounding will not decrease the outcome and the procedure will not depend on the valuation functions any more. So we can still use this new rounding to extend the result to the cone of $k$-dimensional matching valuations with matroid constraints. The proof is quite similar to the matroid case, so we omit it here.

\section{Open counting problems and their connection to Nash Social Welfare}

Let us discuss here some of the directions that our work leaves open. 
One way to resolve the issue of finding a good allocation of items might be to apply the method of conditional expectations.
Given a fractional solution of our concave-convex program, suppose that we apply the following rounding procedure (just like in \cite{AGSS17}).

\

\paragraph{Rounding Procedure 0 \label{rounding:CR-matroid}: simple rounding.}
Given a fractional solution $\bx \in \RR^{n \times m}$:\\
For each $j \in [m]$ independently, allocate item $j$ to at most one agent, agent $i$ with probability $x_{ij}$.

\

\begin{lemma}
For arbitrary monotone valuations, Rounding Procedure 0 provides expected Nash Social Welfare at least as large as Rounding Procedure 2.
\end{lemma}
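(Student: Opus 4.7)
The plan is to couple Rounding Procedure $0$ with Rounding Procedure $2$ so that, item by item, the bundle each agent receives under Rounding $2$ is almost surely contained in the bundle received under Rounding $0$. Once this is in place, monotonicity of each $v_i$ gives a pointwise comparison of the products of valuations, and hence of their geometric means, so taking expectations concludes.

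First I would record two structural facts. Under both procedures, the allocations of distinct items are independent: in Rounding $2$, the variables $\{X_{ij}\}_{i\in[n]}$ associated with item $j$ are disjoint from those associated with $j'\neq j$, and the rank-$1$ contention resolution on item $j$ depends only on the former. Second, for every agent $i$ and item $j$, the marginal probability $q_{ij}$ that $i$ receives $j$ under Rounding $2$ satisfies $q_{ij}\le x_{ij}$, since agent $i$ can receive item $j$ only if $X_{ij}=1$ (an event of probability $x_{ij}$) and survival in the contention resolution can only shrink this further. Under Rounding $0$ the corresponding marginal is exactly $x_{ij}$ and items are again independent.

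Next I would construct the coupling item by item. Independently for each $j$, first sample the Rounding-$0$ recipient $R_j^{(0)}\in[n]\cup\{\perp\}$ with $\Pr[R_j^{(0)}=i]=x_{ij}$ and $\Pr[R_j^{(0)}=\perp]=1-\sum_i x_{ij}$. Then, conditionally on $R_j^{(0)}=i$, set $R_j^{(2)}=i$ with probability $q_{ij}/x_{ij}$ and $R_j^{(2)}=\perp$ otherwise; if $R_j^{(0)}=\perp$, set $R_j^{(2)}=\perp$ as well. A one-line check gives $\Pr[R_j^{(2)}=i]=q_{ij}$, so $R_j^{(2)}$ has the correct marginal, and by construction the bundle $B_i^{(2)}=\{j:R_j^{(2)}=i\}$ is contained in $B_i^{(0)}=\{j:R_j^{(0)}=i\}$ almost surely.

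To finish, I would invoke monotonicity of each $v_i$ to obtain $v_i(B_i^{(2)})\le v_i(B_i^{(0)})$ pointwise, so $\left(\prod_{i=1}^{n} v_i(B_i^{(2)})\right)^{1/n}\le\left(\prod_{i=1}^{n} v_i(B_i^{(0)})\right)^{1/n}$ almost surely; taking expectations yields the claim. I do not foresee a real obstacle here: the entire content of the argument is the elementary inequality $q_{ij}\le x_{ij}$ afforded by the definition of a contention resolution scheme, together with the independence of items under both rounding procedures.
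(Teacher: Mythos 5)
Your proposal is correct and uses essentially the same coupling idea as the paper: both arguments observe that under Rounding 2 each item $j$ goes to agent $i$ with probability at most $x_{ij}$ and items are independent, then build a per-item coupling in which agent $i$'s bundle under Rounding 2 is almost surely contained in their bundle under Rounding 0, so monotonicity of the $v_i$ finishes. The paper constructs the coupling explicitly using the closed form $x'_{ij}=\frac{x_{ij}}{\sum_i x_{ij}}(1-\prod_i(1-x_{ij}))$ of the rank-1 contention-resolution probabilities, whereas you phrase it abstractly (keep $R_j^{(0)}$ with probability $q_{ij}/x_{ij}$, else drop to $\perp$), which is slightly cleaner and avoids any reliance on the specific scheme; the substance is the same.
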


\begin{proof}
We show that the two procedures can be coupled in a way that every agent in Rounding Procedure 2 receives a subset of what they  receive in Rounding Procedure 0.

Consider Rounding Procedure 2: we first set each variable $X_{ij}$ independently to be $1$ with probability $x_{ij}$, and $0$ otherwise. Then for each fixed $j$, we use contention resolution to select one of the variables $X_{ij}$ equal to $1$ (if any) and allocate item $j$ to this agent $i$. An explicit description of such a procedure can be found in \cite{FeigeV10}, Section 1.2: Given initial probabilities $x_{1j}, \ldots, x_{nj}$, agent $i$ receives item $j$ with probability $x'_{ij} = \frac{x_{ij}}{\sum_i x_{ij}} (1 - \prod_i (1-x_{ij})) \geq (1-1/e) x_{ij}$.

Now let us consider the event where no agent receives item $j$: This occurs with probability $1 - \sum_i x'_{ij} = \prod_i (1-x_{ij})$. We modify the scheme by allocating item $j$ in case the contention resolution scheme yielded no allocation, to agent $i$ with additional probability $x_{ij} - x'_{ij}$; with the remaining probability $\prod_i (1-x_{ij}) - \sum_i (x_{ij} - x'_{ij}) = 1 - \sum_i x_{ij}$, nobody receives the item. Note that the resulting scheme is exactly Rounding Procedure 0: agent $i$ receives item $j$ with probability $x_{ij}$, indepenently for each item (since all the randomness that we used was independent for each item $j$). As a result, what agents receive in Rounding Procedure 0 dominates what they receive in Rounding Procedure $2$.
\end{proof}

Let us consider the case of coverage, where each valuation can be viewed as a summation of rank-1 matroid rank functions (corresponding to the coverage of individual elements).
Here, the relaxation looks as follows.

\begin{align*}
(P''_1) \\
&\max_{\bx \in \RR^{n \times m}}  \inf_{\substack{\by \in \RR_+^m:\\ y^S\ge1,\forall S\in\binom{[m]}{n}}} \prod_{i=1}^n&\left( \sum_{e \in U} \sum_{j \in D_{ie}} x_{ije} y_j \right), \\
& s.t.~~~~~~~~~~ \sum_{j \in D_{ie}} x_{ije} \leq 1 & \forall i \in [n], e \in U \\
& ~~~~~~~~~~~~~~ x_{ije} \leq x_{ij} & \forall i \in [n], e \in U, j \in D_{ie} \\
& ~~~~~~~~~~~~~~ \sum_{i=1}^{m} x_{ij}\le 1 & \forall j \in [m] \\
& ~~~~~~~~~~~~~~  x_{ij}, x_{ije} \ge 0 & \forall i \in [n], j \in [m], e \in U \\
\end{align*}

Here, the variable $x_{ije}$ denotes the amount to which item $j$ covers element $e$ for agent $i$; $D_{ie}$ is the set of items whose set for agent $i$ covers element $e$. We consider Rounding Procedure 0, where item $j$ goes to agent $i$ with probability $x_{ij}$. Here, the expected value of the random assignment is
$$ \sum_{\sigma:[m] \rightarrow [n]} \prod_{j=1}^{m} x_{\sigma(j), j} \prod_{i=1}^{n} | \{ e \in U: \exists j \in D_{ie}, \sigma(j) = i\} | $$ 
$$ = \sum_{\sigma:[m] \rightarrow [n]} \prod_{j=1}^{m} x_{\sigma(j), j} \prod_{i=1}^{n} \sum_{e \in U} \b1_{\exists j \in D_{ie}, \sigma(j)=i} $$ 
$$ = \sum_{e_1,\dots,e_n \in U} \sum_{\sigma:[m] \rightarrow[n]} \prod_{j=1}^{m} x_{\sigma(j), j} \prod_{i=1}^{n} \b1_{\exists j \in D_{i e_i}, \sigma(j)=i} $$
$$ = \sum_{e_1,\dots,e_n \in U} \sum_{\sigma: \forall i \exists j \in D_{i e_i}, \sigma(j)=i} \prod_{j=1}^{m} x_{\sigma(j), j}. $$
This last summation can be viewed as the following counting problem: For each agent $i$, fix an element $e_i$ in her coverage universe. Let ${\cal S}(e_1,\ldots,e_n)$ be the set of assignments of items such that each agent $i$ receives some item $j \in D_{i e_i}$, i.e.~some item covering the element $e_i$. We want to count the probability that a random assignment satisfies this property, summed up over all choices of $e_1,\ldots,e_n$. It is conceivable that this counting problem can be reduced to counting of bipartite matchings, but we haven't succeeded in doing so.

\end{document}